
\documentclass{article}
\usepackage[utf8]{inputenc}
\usepackage[noadjust]{cite}
\usepackage{hyperref}
\usepackage{enumerate}
\bibliographystyle{acm.bst}

\usepackage{amsmath}
\usepackage{amssymb}
\usepackage{amsthm}
\usepackage{authblk}
\usepackage{bm}
\usepackage{comment}
\usepackage[linesnumbered,ruled,noline]{algorithm2e}

\usepackage{tikz}
\usetikzlibrary{patterns, decorations.pathreplacing}
\usepackage{subcaption}

\newcommand{\hdist}{\texttt{heavy\_dist}}
\newcommand{\dist}{\texttt{dist}}

\newtheorem{definition}{Definition}
\newtheorem{theorem}{Theorem}

\newtheorem{lemma}[theorem]{Lemma}

\newcommand{\Oish}{\widetilde{O}}

\usepackage[colorinlistoftodos,textsize=small,color=red!25!white,obeyFinal]{todonotes}

\usepackage{placeins}

\title{Fast Construction of 4-Additive Spanners}
\author{Bandar Al-Dhalaan}
\affil{bandar@umich.edu\\ Computer Science and Engineering\\ University of Michigan}
\date{}

\begin{document}

\maketitle
\thispagestyle{empty}

\begin{abstract}

    A $k$-additive spanner of a graph is a subgraph that preserves the distance between any two nodes up to a total additive error of $+k$. Efficient algorithms have been devised for constructing 2 [Aingworth et al. SIAM '99], 6 [Baswana et al. ACM '10, Woodruff ICALP '13], and 8-additive spanners [Knudsen '17], but no efficient algorithms for 4-additive spanners have yet been discovered. In this paper we present a modification of Chechik's 4-additive spanner construction [Chechik SODA '13] that produces a 4-additive spanner on $\Oish(n^{7/5})$ edges, with an improved runtime of $\Oish(mn^{3/5})$ from $O(mn)$. We also discuss generalizations to the setting of weighted additive spanners.

\end{abstract}
\clearpage
\pagenumbering{arabic}

\section{Introduction}

    \begin{table}[t]
    \begin{center}
    \begin{tabular}{|c|c|c| }
    \hline
    Additive Stretch & Size & Time \\
    \hline \hline
         +2 &  $O(n^{3/2})$ & $O(mn^{1/2})$ \cite{aingworth1999fast}  \\
         \hline
         +4 & $\Oish(n^{7/5})$ & $O(mn)$  \cite{chechik2013new} \\
         \hline
         \textbf{+4} & $\bm{\Oish(n^{7/5})}$ & $\bm{\Oish(mn^{3/5})}$ \textbf{(this paper)}\\
         \hline
         +6 & $\Oish(n^{4/3})$ &  $\Oish(n^2)$ (\cite{Baswana2010} and \cite{woodruff2010additive})\\
         \hline
         +8 & $O(n^{4/3})$ & $O(n^2)$ \cite{knudsen2017additive}\\
         \hline
    \end{tabular}
    \caption{Notable Additive Spanner Constructions}
    \end{center}
    \end{table}

A graph on $n$ nodes can have on the order of $m = O(n^2)$ edges. For very large values of $n$, this amount of edges can be prohibitively expensive, both to store in space and to run graph algorithms on. Thus it may be prudent to operate instead on a smaller approximation of the graph. A \textit{spanner} is a type of subgraph which preserves distances between nodes up to some error, which we call the stretch. Spanners were introduced in \cite{peleg1989graph} and \textit{additive} spanners were first studied in \cite{liestman1993additive}.

\begin{definition}[Additive Spanners]

A $k-$additive spanner (or ``$+k$ spanner") of a graph $G$ is a subgraph $H$ that satisfies $\dist_H(s,t) \leq \dist_G(s,t) + k$ for each pair of nodes $s,t \in V(G)$. $k$ is called the (additive) \textit{stretch} of the spanner.

\end{definition}

Note that since $H$ is a subgraph of $G$, the lower bound $\dist_G(s,t) \leq \dist_H(s,t)$ is immediate (the error is one-sided). Spanners have found applications in distance oracles \cite{baswana2006faster}, parallel and distributed algorithms for computing almost shortest paths \cite{cohen1998fast,elkin2005approximating}, synchronizers \cite{peleg1987optimal}, and more. 

Spanners are a tradeoff between the size of the subgraph and the stretch; spanner size can be decreased at the cost of a greater stretch and vice versa. The +2 spanner construction due to Aingworth et al. produces spanners of size $O(n^{3/2})$, and this size is optimal \cite{aingworth1999fast}. The +4 spanner construction is due to Chechik, which produces smaller spanners of size $\Oish(n^{7/5})$ (though this bound is not known to be tight; it is conceivable that further improvements may reduce size up to $O(n^{4/3})$)\cite{chechik2013new}. The +6 construction due to Baswana et al. \cite{Baswana2010} and +8 construction due to Knudsen \cite{knudsen2017additive} achieve an $O(n^{4/3})$ spanner. It is known that any $+k$ spanner construction has a lower bound of $n^{4/3 - o(1)}$ edges on the spanners it produces, so error values greater than +6 are not existentially of interest; the +8 spanner exchanges error for a polylog improvement in construction speed. 

In addition to finding spanner constructions that produce the smallest spanners possible, it is also in our interest that these construction algorithms be fast. Because spanners are meant to make graphs more compact, they are mainly of interest for very large graphs. Thus, for very large $n$, a polynomial time speedup to an algorithm for producing $+k$ spanners is highly desirable. There is a long line of work done in the interest of speeding up spanner constructions, including \cite{ roditty2004dynamic, baswana2007simple,woodruff2010additive,knudsen2017additive, alstrup2019constructing}. 
For a comprehensive survey, see \cite{ahmed2020graph}. 

Some additive spanner size and efficiency results are summarized in Table 1 above. As mentioned above, the $+8$ spanner due to Knudsen \cite{knudsen2017additive} exchanges error over the $+6$ spanner for a polylog improvement in construction time. In this paper, we present a polynomial speedup to Shiri Chechik's 4-additive spanner construction presented in \cite{chechik2013new}, at no cost in error.







\begin{theorem} [Main Result]
There is an algorithm that constructs (with high probability) a $+4$ spanner on $\Oish(n^{7/5})$ edges in $\Oish(mn^{3/5})$ time.
\end{theorem}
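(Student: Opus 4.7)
The plan is to revisit each step of Chechik's $+4$ spanner construction \cite{chechik2013new} and speed up the costly subroutines while preserving both the $\Oish(n^{7/5})$ edge count and the $+4$ additive stretch. Chechik's algorithm can be decomposed into an initial clustering/low-degree edge-addition phase, a BFS-from-pivots phase, and a pair-wise path-buying phase that covers shortest paths not already handled by the first two phases. The first phase is already near-linear, so the $O(mn)$ bottleneck lies in the latter two, each of which performs on the order of $n$ shortest-path computations costing $O(m)$ apiece. The target runtime $\Oish(mn^{3/5})$ dictates that we do no more than $\Oish(n^{3/5})$ full BFS computations overall.

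First, I would reparameterize the clustering phase so that the degree threshold separating ``light'' from ``heavy'' vertices yields an $\Oish(n^{3/5})$-sized random pivot set that with high probability hits every heavy vertex. From each pivot I run a single BFS, contributing $\Oish(mn^{3/5})$ time; a standard clustering/edge-subsampling pass (in the spirit of the speedups achieved for $+6$ and $+8$ spanners in \cite{Baswana2010,knudsen2017additive}) then keeps the per-pivot edge contribution within the $\Oish(n^{7/5})$ budget. Next, I would replace the pair-wise path-buying step by a truncated variant: from each cluster representative I run a BFS to depth $\Oish(n^{3/5})$, which costs $\Oish(mn^{3/5})$ in total across all clusters. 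Pairs of clusters whose representatives are within distance $n^{3/5}$ have their shortest path discovered by the truncated BFS and handed directly to the path-buying subroutine; pairs at greater distance have shortest paths with $\Omega(n^{3/5})$ internal vertices, which with high probability meet the pivot sample and are therefore already approximated via the BFS trees produced in the pivot phase.

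The hardest part will be verifying that the additive error still composes to exactly $+4$ after these substitutions. Chechik's original analysis hinges on a delicate case split in which a detour from an uncovered path segment through a cluster representative costs at most $+2$ at each endpoint. In my version the detour may instead route through a sampled pivot reached along a BFS tree, and I must reprove her key triangle-inequality argument with the modified distances imposed by the truncation radius; if the direct substitution overshoots to $+5$ or $+6$, the natural remedy is to shrink the truncation radius, or alternatively to add a carefully chosen fix-up set of $\Oish(n^{7/5})$ pivot-to-pivot edges that absorbs the missing length without blowing the edge budget. A subsidiary technical point is that the high-probability guarantees on pivot sampling must be union-bounded against all $O(n^2)$ pairs of vertices; this is standard but must be rechecked against the new sample-size parameters, and will determine the polylog factors hidden inside $\Oish(\cdot)$.
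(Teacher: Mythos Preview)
Your proposal misses the central mechanism of the paper and contains a genuine gap. You propose to dichotomize on \emph{path length}, running truncated BFS to depth $\Oish(n^{3/5})$ from each cluster center and letting the pivot sample handle longer paths. But Chechik's case split is not on length: it is on the number of \emph{heavy nodes} along $\pi_G(s,t)$, and that distinction is what controls the edge budget in stage~(iii). A shortest path of length at most $n^{3/5}$ between two $S_2$ centers can consist entirely of heavy edges, so adding such a path contributes up to $n^{3/5}$ new edges; summed over $|S_2|^2=\Oish(n^{6/5})$ pairs this is $\Oish(n^{9/5})$ edges, well past $\Oish(n^{7/5})$. Conversely, a path that is long in hop count but has $\le\mu^3/n=\Oish(n^{1/5})$ heavy edges is cheap to add yet would be discarded by your truncation. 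Your appeal to ``a standard clustering/edge-subsampling pass'' does not close this hole: the $+6$/$+8$ path-buying credit schemes you cite are tuned to those constructions and do not drop into Chechik's $+4$ analysis without reworking the stretch argument from scratch.

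The paper does not reparameterize anything. Chechik's thresholds and the two sets $S_1$ (size $\Oish(n^{2/5})$, for BFS trees) and $S_2$ (size $\Oish(n^{3/5})$, for cluster centers) are kept exactly as before; you appear to conflate them into a single pivot set, and enlarging the BFS-tree set to $\Oish(n^{3/5})$ would already put $\Oish(n^{8/5})$ tree edges into the spanner. The actual bottleneck in Chechik's algorithm is that stage~(iii) solves a \emph{constrained} shortest-path problem (shortest path with at most $\mu^3/n$ heavy nodes) between cluster pairs, which naively needs APSP. The paper's two observations are: (a) it suffices to compute these constrained paths only between the $\Oish(n^{3/5})$ nodes of $S_2$ rather than between all clustered vertices, and (b) a relaxed version of the constrained problem (``weak CSSSP'') is enough, and this relaxed version can be solved in plain Dijkstra time by reweighting each heavy edge from $1$ to $1+g^{-1}$ and running a single Dijkstra per source. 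The reweighting trick is the new idea: the $g^{-1}$ penalty is small enough to act only as a tiebreaker (so the returned path beats every path with $<g$ heavy edges in length), yet large enough that the returned path cannot carry more than $5g$ heavy edges without contradicting the existence of a short witness path. This gives $|S_2|\cdot\Oish(m)=\Oish(mn^{3/5})$ total time with no change to the $+4$ stretch analysis and no truncation.
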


For comparison, the bottleneck to Chechik's original construction is solving the All-Pairs-Shortest-Paths (APSP) problem; with combinatorial methods, this has an $O(mn)$ runtime, and with matrix multiplication methods, $O(n^\omega)$ \cite{seidel1995all} \footnote{We note that when $m> n^{\omega-0.4}$, the bottleneck in the algebraic case is instead the second stage of the algorithm described in section 1.2 ($\Oish(mn^{2/5})$).}. Currently, $\omega <2.373$ \cite{alman2021refined}. See Section 1.2 for a full runtime analysis of the original construction. 

Our speedup relies on avoiding the APSP problem. We do this by realizing that we can weaken the path finding methods in Chechik's original construction without compromising error. In particular, we introduce a new problem in Section 2.2, which we call the ``Weak Constrained Single Source Shortest Paths" (weak CSSSP) problem. We give a Dijkstra-time solution to this problem and apply it to create our new +4 spanner construction in Section 2.3.



With matrix multiplication methods for APSP, Chechik's algorithm has an $\Oish(n^\omega)$ implementation \cite{seidel1995all}.
However, we note that there is a range of $m$ values where our combinatorial algorithm potentially outperforms algebraic methods; for example if $\omega \geq 2.3$, then when $m < n^{1.699}$, the complexity of our algorithm is polynomially faster.

Finally, we extend our weak CSSSP method to the construction of spanners in the weighted setting. While the unweighted setting is predominant in the study of additive spanners, the generalization was made to the weighted setting by Elkin et al. in \cite{elkin2019almost}, and further work was done in \cite{ahmed2020weighted,elkin2020improved,ahmed2021additive}. There are two different weighted generalizations of $+k$ spanners in the literature; the weaker generalization (introduced in \cite{ahmed2020weighted}) requires the spanner $H$ to satisfy $\dist_H(s,t) \leq \dist_G(s,t) + kW$ for each $s,t \in V(G)$, where $W = \max_{e \in E(G)} w(e)$ is the maximum edge weight of the edges of $G$. These are called $+kW$ spanners. The stronger generalization (studied in \cite{elkin2020improved, ahmed2021additive}) defined below restricts the edge weight stretch factor to the maximal edge weight over shortest $s \leadsto t$ shortest paths, denoted $W(s,t)$. For this reason this generalization is known as ``local weighted error".

\begin{definition}[Weighted Additive Spanner (Global Error)]
A $+kW$ spanner of a graph $G$ is a subgraph $H$ that satisfies $\dist_H(s,t) \leq \dist_G(s,t)+kW$ for all $s,t \in V(G)$, where $W = \max_{e \in E}w(e)$
\end{definition}

\begin{definition}[Weighted Additive Spanner (Local Error)]
A $+kW(\cdot,\cdot)$ spanner of a graph $G$ is a subgraph $H$ that satisfies $\dist_H(s,t) \leq \dist_G(s,t)+kW(s,t)$ for all $s,t \in V(G)$, where $W(s,t)$ is the maximum edge weight along a shortest path $\pi_G(s,t)$ in $G$.
\end{definition}

In \cite{ahmed2021additive}, Ahmed et al. generalized the 4-additive spanner construction presented in \cite{chechik2013new} to the (strong) weighted setting. Their algorithm constructs a $+4W(\cdot,\cdot)$ spanner on $\Oish(n^{7/5})$ edges, and can be implemented in $\Oish(mn^{4/5})$ time by using AliAbdi et al's Bi-SPP algorithm \cite{AliAbdi2019} for constrained shortest path finding. Our contribution in section 3 will be a $+4W(\cdot,\cdot) + \epsilon W$ spanner in $\Oish(mn^{3/5})$ time, on $\Oish_\epsilon(n^{7/5} \epsilon^{-1})$ edges with high probability, for any error $1>\epsilon >0$.

\begin{theorem}
For any weighted graph $G = (V,E)$ and $\epsilon >0$, there is a $+4W(\cdot,\cdot)+\epsilon W$ spanner on $\Oish_\epsilon(n^{7/5})$ edges and computable in $\Oish(mn^{3/5})$ time, with high probability.
\end{theorem}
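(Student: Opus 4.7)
The plan is to mirror, in the weighted setting, the speedup we obtained over Chechik's unweighted construction in Theorem 1. The starting point is Ahmed et al.'s $+4W(\cdot,\cdot)$ spanner construction \cite{ahmed2021additive}, which achieves $\Oish(n^{7/5})$ edges in $\Oish(mn^{4/5})$ time by invoking AliAbdi et al.'s Bi-SPP routine \cite{AliAbdi2019} as a black box for constrained shortest paths. The goal is to replace those Bi-SPP calls with a weighted analogue of the weak CSSSP primitive from Section 2.2, each running in essentially Dijkstra time, at the cost of an extra $+\epsilon W$ slack in the reported distances.

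The first step would be to formulate a ``weighted weak CSSSP'' problem matching the guarantees needed by the construction but allowing an additive $\epsilon W$ tolerance on reported constrained path lengths. To obtain a near-linear-time solution, I would round each edge weight to the nearest multiple of $\epsilon W / n$ and run a weighted variant of the Section 2.2 algorithm on the rounded graph using standard bucket and priority-queue machinery. Since every simple path has at most $n-1$ edges, the per-path rounding distortion is at most $\epsilon W$, which is exactly the slack we are willing to pay.

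Next, I would plug this primitive into Ahmed et al.'s construction in place of Bi-SPP. Their clustering and sampling framework is largely oblivious to the internal details of the path-finding subroutine; the spanner size analysis carries over, with an extra $\epsilon^{-1}$ blowup from the refined weight discretization absorbed into the $\Oish_\epsilon$ notation. The runtime then parallels Theorem 1: weighted weak CSSSP runs in $\Oish(m)$ time per source, invoked over $\Oish(n^{3/5})$ sources in the bottleneck stage, for a total of $\Oish(mn^{3/5})$.

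The main obstacle will be the error analysis. I need to verify that (i) using a single rounded graph consistently throughout the construction causes the $\epsilon W/n$ rounding to contribute only a single additive $\epsilon W$ term at the end rather than compounding across phases, and (ii) the $+4W(\cdot,\cdot)$ argument of \cite{ahmed2021additive} remains valid when each sub-path length returned by the subroutine is over-estimated by up to $\epsilon W$. Both reduce to careful triangle-inequality bookkeeping: fixing the discretization once globally ensures that every sub-path we concatenate already reflects the same rounding, and the original $+4W(\cdot,\cdot)$ case analysis (distinguishing shortest paths that do, or do not, intersect sampled clusters) absorbs the $\epsilon W$ slack into the final stretch without any structural change. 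The high-probability guarantee follows directly from the randomized sampling in the clustering subroutine, exactly as in Theorem 1.
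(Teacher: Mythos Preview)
Your high-level plan---define a weighted weak CSSSP variant that tolerates $+\epsilon W$ slack, solve it in near-linear time per source, and plug it into the Ahmed et al.\ framework---is exactly what the paper does. But your proposed mechanism for solving weighted weak CSSSP has a real gap.

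Rounding edge weights to multiples of $\epsilon W/n$ does control the \emph{distance} distortion (at most $\epsilon W$ per simple path), but it does nothing to bound the number of \emph{gray edges} on the path Dijkstra returns. The whole point of the gray-edge bound is to cap how many new edges each inter-cluster path contributes to the spanner; without it the size analysis collapses. If you try to recover a gray-edge bound by adding a tiebreaking penalty calibrated to your rounding unit (penalty $\approx (\epsilon W/n)\cdot g^{-1}$ per gray edge, so that it functions as a sub-unit tiebreaker), the contradiction argument from Section~2.2 only kicks in once the number of gray edges exceeds $\Theta(ng/\epsilon)$, not $\Theta(g/\epsilon)$. Multiplying by the $\Theta(n^2/\mu^2)$ pairs of $S_2$ nodes then gives $\Theta(n^2/\epsilon)$ edges, not $\Oish(n^{7/5})$. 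Relatedly, your attribution of the $\epsilon^{-1}$ size blowup to ``refined weight discretization'' is off: in the actual construction the $\epsilon^{-1}$ factor arises precisely because the gray-edge bound on each returned path is $5g/\epsilon$ rather than $5g$.

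The paper sidesteps rounding entirely. It keeps the original real-valued weights and simply adds a penalty of $\epsilon W g^{-1}$ to every gray edge, then runs a single Dijkstra. The near-$g$-optimality is immediate (any competing path with $<g$ gray edges carries total penalty $<\epsilon W$), and the gray-edge bound follows because a path with $>5g/\epsilon$ gray edges would carry penalty $>5W$, which contradicts the existence of a $g$-short path of weight at most $\dist_G(s,t)+2W$. No discretization, no bucket tricks---just one reweighted Dijkstra in $O(m+n\log n)$ time.
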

\noindent
We note that while Ahmed et. al's construction in  \cite{ahmed2021additive} doesn't have the extra $+\epsilon W$ stretch, our construction comes with a polynomial speedup. 

\subsection{Notations}

We will use $\pi_G(s,t)$ to refer to a canonical shortest path between two nodes $s,t \in V(G)$. $P(s,t)$ is a variable we use in some of our algorithms that describes some computed $s \leadsto t$ path. For a node $v \in V(G)$, $\Gamma_G(v)$ denotes the neighborhood of $v$ (the set containing $v$ and its neighbors) in $G$. When $S$ is a set, $\Gamma_G(S):= \bigcup_{v \in S} \Gamma_G(v)$. $\mathcal{P}(u,v)$ denotes the set of all paths between nodes $u,v$. If $A,B$ are subsets of $V$, then $\mathcal{P}(A,B) = \bigcup_{u\in A, v\in B}\mathcal{P}(u,v)$

\subsection{Current Runtime of the +4 Spanner Construction}


In \cite{chechik2013new}, Shiri Chechik presents a spanner construction that produces a +4 spanner on $\Oish(n^{7/5})$ edges on average with probability $>1-1/n$. The runtime complexity, however, was not analyzed. In this section, we will describe Chechik's algorithm and then give a runtime analysis. Chechik's construction of a +4 spanner $H$ of an input graph $G$ can be split into three stages:

\begin{enumerate}[(i)]
    \item All edges adjacent to ``light nodes" (nodes with degree $< \mu = \lceil n^{2/5}\log^{1/5}n\rceil$) are added to $H$.
    
    \item Nodes are sampled for inclusion into a set $S_1$ with probability $9\mu/n$. BFS (Breadth-First Search) trees for these nodes are computed, and the edges for these trees are added to $H$.
    
    \item Nodes are sampled for inclusion into a set $S_2$ with probability $1/\mu$. For each ``heavy" node (nodes with degree $\geq \mu$ in the original graph) $v$ that is not in $S_2$, but is adjacent to some node of $S_2$, we arbitrarily choose a neighbor $x \in S_2$ and add the edge $(v,x)$ to $H$. These choices also define the ``clusters" of the graph: for each $x \in S_2$, $C(x)$ is the set containing $x$ and its adjacent heavy nodes that were paired with $x$ in the previous step. We now find, for each pair $x_1,x_2 \in S_2$, the shortest path $P(s,t)$ subject to the constraint that $s \in C(x_1)$, $t \in C(x_2)$, and $P(s,t)$ has $\leq \mu^3/n$ heavy nodes. We use $\hdist_G(P(s,t))$ to refer to the number of heavy nodes on $P(s,t)$ in $G$.

\end{enumerate}

Algorithm \ref{alg:chechik} gives the full details. Stage (i) takes $O(n)$ time and stage (ii) takes $\Oish(mn^{2/5})$ time with high probability. The computationally dominant step of this algorithm is the task of finding these shortest paths between the clusters in (iii) (unless algebraic methods are used, in which case stage (ii) dominates). For worst case inputs, the expected number of clustered nodes (nodes in some cluster) is $\Omega(n)$. Thus, this algorithm's runtime will be bottlenecked by the all-pairs-shortest-paths problem. We now show that the heavy-node constraint on the paths does not increase the runtime.
To see this, we note that it's enough to search over paths of the form
$$P(x_1, x_2) = (x_1,s)\circ \pi_G(s,t) \circ (t,x_2) \footnote{Paths of this form will be important in our own construction (see Definition 5).}$$
($\circ$ denotes path concatenation), where $x_1,x_2$ range over $S_2$ and $s \in C(x_1)$, $t \in  C(x_2)$. Specifically, we want the shortest path of this form for each pair of clusters $C(x_1),C(x_2)$, where $\pi_G(s,t)$ is a constraint-satisfying ($\leq\mu^3/n$ heavy nodes) path that is also a shortest path in $G$.

To find such paths, we first solve APSP ($O(mn)$ time combinatorially, $O(n^{\omega})$ time algebraically) to get shortest paths $\pi_G(s,t)$ for each $s,t \in V$. Then for all pairs of clustered nodes $s,t$, with cluster centers $x_1,x_2$ respectively: if $\hdist_G(\pi_G(s,t)) \leq \mu^3/n$, set $P(x_1,x_2) = (x_1,s) \circ \pi_G(s,t) \circ (t,x_2)$ as the current best path for the cluster pair $(C(x_1),C(x_2))$ if one hasn't yet been selected, otherwise replace the current path iff $P(x_1,x_2)$ is shorter. This is an APSP time process for finding the shortest valid canonical shortest path connecting each cluster pair; at the end of the process, we add the edges of these best paths. We note that because we're not searching over \textit{all} paths, but only one set of canonical shortest paths, it's possible we fail to find valid (constraint satisfying) paths between some cluster pairs. This does not impede correctness, as we only require these paths in the cases that they exist. 

\begin{figure}[h]
\begin{algorithm}[H]
    \label{alg:chechik}\caption{Chechik's 4-Additive Spanner Construction \cite{chechik2013new}}
    \KwIn{$n$-node graph $G=(V,E)$}\vspace{1em}
    $E'=$ All edges incident to light nodes\\
    Sample a set of nodes $S_1$ at random, every node with probability $9\mu/n$\\
    \ForEach{node $x\in S_1$}{
        Construct a BFS tree $T(x)$ rooted at $x$ spanning all vertices in $V$\\
        $E'=E'\cup E(T(x))$\\
    }\vspace{1em}
    Sample a set of nodes $S_2$ at random, every node with probability $1/\mu$\\
    \ForEach{heavy node $x$ so that $(\{x\}\cup\Gamma_G(x))\cap S_2 = \varnothing$}{
        Add all incident edges of $x$ to $E'$\\
    }
    \ForEach{node $x\in S_2$}{
        C(x) = \{x\}
    }
    \ForEach{heavy node $v$ so that $v\not\in S_2$ and $\Gamma(v)\cap S_2\neq\varnothing$}{
        Arbitrarily choose one node $x\in\Gamma_G(v)\cap S_2$\\
        $C(x) = C(x) \cup \{x\}$\\
        $E' = E' \cup \{(u,v)\}$\\
    }
    \ForEach{pair of nodes $(x_1, x_2)$ in $S_2$}{
        Let $\hat{\mathcal{P}}=\{P\in\mathcal{P}(C(x_1),C(x_2))\ |\ \hdist_G(P)\leq\mu^3/n\}$\\
        Let $P(\hat y_1, \hat y_2)$ be the path in $\hat{\mathcal{P}}$ with minimal $\left|P(\hat y_1, \hat y_2)\right|$
        $E'=E'\cup E(P(\hat y_1, \hat y_2))$
    }
    \Return $H=(V,E')$
\end{algorithm}
\end{figure}

\FloatBarrier 


\section{Fast Construction of the +4 Spanner}

In this section, we present our main result; a modification of Chechik's +4 spanner construction that has $\Oish(mn^{3/5})$ runtime with high probability, with no compromise to size or error.

\subsection{Constrained Shortest Paths}

Chechik's original algorithm required the computation of shortest paths subject to a constraint on the number of heavy nodes in the paths. This was a proxy for constraining the number of edges that had not yet been added to the spanner at that point in the construction. We will call CSSSP the ``Constrained Single Source Shortest Path Problem". This is similar to the GB-SPP (``Gray-Vertices Bounded Shortest Path Problem") presented by AliAbdi et al. in \cite{AliAbdi2019}, but our constraint is on the edges instead of on the nodes.

\begin{definition}[CSSSP] The constrained single-source shortest paths problem is defined by the following algorithm contract:
\begin{itemize}
    \item \textbf{Input:} An (unweighted, undirected) graph $G=(V,E)$, a set of ``gray" edges $E_g \subset E$, a source vertex $s \in V$, and a positive integer $g$.
    
    \item \textbf{Output:} For every $t \in V$, a path $P(s,t)$ on $\leq g$ gray edges, where $|P(s,t)| \leq |P'(s,t)|$ for all $s \leadsto t$ paths $P'(s,t)$ on $\leq g$ gray edges.
\end{itemize}
\end{definition}

Our modification to Chechik's construction will also make use of constrained shortest path finding, but the CSSSP problem is stronger than necessary for our purposes, and we can get away with a better runtime by solving a weaker problem. In this section, we define and give an efficient algorithm for a weaker variation on CSSSP, which we'll call weak CSSSP. In particular, we will only need to find constrained shortest paths from $s$ to $t$ in situations where a certain type of $s \leadsto t$ constrained path already exists. We define these paths and call them \textbf{g-short paths}.

\begin{definition}[g-short path]
For two nodes $s,t$, an $s \leadsto t$ path is called ``g-short" if it has $<g$ gray edges and is of the form $(s,s') \circ \pi_G(s',t') \circ (t,t')$, where $\pi_G(s',t')$ is a shortest path.
\end{definition}

\noindent
These g-short paths naturally arise in the analysis of our 4-additive spanner construction later in the paper. Note that g-short paths are not necessarily \textit{shortest} paths, but they are at most $+2$ edges longer than a shortest path (in the present unweighted case). Now we are ready to define weak CSSSP:

\begin{definition}[Weak CSSSP] The weak constrained single-source shortest paths problem is defined by the following algorithm contract:

\begin{itemize}
    \item \textbf{Input:} An (unweighted, undirected) graph $G=(V,E)$, a set of ``gray" edges $E_g \subset E$, a source vertex $s \in V$, and a positive integer $g$.
    
    \item \textbf{Output:} For every $t \in V$, a path $P(s,t)$ on $\leq 5g$ gray edges, satisfying the following: 
    \begin{itemize}
        \item If there exists an $s\leadsto t$ \textbf{g-short path}, then the \textbf{g-optimality condition} holds: $|P(s,t)| \leq |P_g(s,t)|$ for any $s \leadsto t$ path $P_g(s,t)$ on $<g$ gray edges.
	\item If no such path exists, then $P(s, t)$ can be anything.
        
    \end{itemize}

\end{itemize}
\end{definition}


Informally, if there is a ``short enough" path from $s$ to $t$ with $<g$ gray edges, then the outputted path $P(s,t)$ has $\leq 5g$ gray edges and is shortest among all $s \leadsto t$ paths with $<g$ gray edges. Besides the constant factor on the gray-edge bound, the only difference between weak CSSSP and CSSSP is that we put a precondition on a g-short $s \leadsto t$ path existing. When we need weak CSSSP later in the paper, it will only be in situations where this g-short path exists.

One can solve the weak CSSSP problem by simply solving the general CSSSP problem. AliAbdi et al. in \cite{AliAbdi2019} present a label-setting algorithm ``Bi-SPP" (Bi-Colored Shortest Path Problem) for solving GB-SPP. This is the problem of finding shortest paths from a source node $s \in V$ to every other node $t \in V$ subject to the constraint that the paths have $\leq g$ gray \textbf{nodes}.

\begin{theorem}[Implicit in Prior Work]
There is an $\Oish(gm)$ solution to the weak CSSSP problem
\end{theorem}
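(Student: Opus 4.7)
The plan is to solve the stronger CSSSP problem directly, from which weak CSSSP follows trivially, and to invoke Bi-SPP for GB-SPP via a standard edge-subdivision reduction. Concretely, I would construct an auxiliary graph $G'$ by replacing each edge $e = (u,v) \in E$ with a two-edge path $(u, w_e, v)$ through a fresh intermediate vertex $w_e$, and color $w_e$ gray if and only if $e \in E_g$; all original vertices of $V$ remain non-gray. Then $G'$ is bipartite with the two parts $V$ and $\{w_e : e \in E\}$, has $|V(G')| = n + m$ vertices and $|E(G')| = 2m$ edges, and by bipartiteness every $s \leadsto t$ path in $G'$ must alternate between the two sides. Consequently, $s \leadsto t$ paths in $G$ of length $\ell$ with $k$ gray edges are in bijection with $s \leadsto t$ paths in $G'$ of length $2\ell$ with exactly $k$ gray vertices.

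Next I would invoke Bi-SPP on $G'$ with source $s$ and gray-vertex bound $g$; per the cited label-setting algorithm, this runs in $\Oish(g \cdot |E(G')|) = \Oish(gm)$ time and, for each $t \in V$, returns a shortest $s \leadsto t$ path in $G'$ with at most $g$ gray vertices. Contracting each $w_e$ back into its original edge recovers a path $P(s,t)$ in $G$ with at most $g$ gray edges that is shortest among all $s \leadsto t$ paths in $G$ subject to the $\leq g$ gray-edge constraint. This is a full CSSSP solution, which in particular implies weak CSSSP: the output has $\leq g \leq 5g$ gray edges, and since the set of paths with $< g$ gray edges is contained in the set of paths with $\leq g$ gray edges, we immediately get $|P(s,t)| \leq |P_g(s,t)|$ for every $P_g(s,t)$ with $<g$ gray edges, i.e.\ the g-optimality condition. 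Notably, this holds unconditionally, without needing the g-short-path precondition.

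The only step that warrants verification is that the $\Oish(gm)$ bound for Bi-SPP on GB-SPP transfers cleanly to $G'$, whose vertex and edge counts are inflated to $O(n+m)$ and $2m$ respectively. Since the label-setting complexity for GB-SPP is stated in terms of edge count and polylogarithmic factors in vertex count, this gives $\Oish(g \cdot 2m) = \Oish(gm)$, matching the claim. I do not anticipate a real obstacle here; the main content is the subdivision reduction and the observation that solving CSSSP is overkill for weak CSSSP but already fits within the target runtime.
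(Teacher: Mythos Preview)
Your reduction is correct and actually proves the stronger CSSSP statement, not just weak CSSSP. The paper takes a different, more direct reduction to GB-SPP: rather than subdividing edges, it works on $G$ itself and declares a \emph{vertex} gray whenever it is incident to some gray edge, then runs Bi-SPP with gray-vertex budget $g'=2g$. This avoids blowing up the vertex count to $n+m$, but the correspondence between gray edges and gray vertices is looser, so the paper only gets paths with at most $2g$ gray edges (still within the $5g$ allowance) and only argues the $g$-optimality direction needed for weak CSSSP. Your subdivision trick buys an exact bijection between gray edges in $G$ and gray vertices in $G'$, hence a clean full CSSSP solution with the tight $\le g$ gray-edge bound; the price is the auxiliary graph of size $\Theta(m)$ vertices, which, as you note, is harmless since Bi-SPP's $O(g'(m'+n'\log n'))$ bound on $G'$ is still $\Oish(gm)$ for connected inputs. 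Both approaches ultimately hinge on the same black-box call to Bi-SPP, so the difference is in the reduction rather than the algorithmic core.
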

\begin{proof}
Given input to an instance of the weak CSSSP problem: we designate a \textit{node} to be gray if it is adjacent to a gray edge. It is clear that if a path has $< g$ gray edges, it must then have $< 2g$ gray nodes. Thus if we solve GB-SPP on this graph with parameter $g'=2g$, the resulting paths $P(s,t)$ satisfy $|P(s,t)| \leq |P_g(s,t)|$ for any path $P_g(s,t)$ on $<g$ gray edges. Furthermore, the paths $P(s,t)$ have $< 2g \leq 5g$ gray edges. Therefore these resulting paths satisfy the weak CSSSP requirements. Using the Bi-SPP algorithm, this can be done in $O(g(m+n \log n)) = \Oish(gm)$ time \cite{AliAbdi2019}.

\end{proof}


We now present a new algorithm for solving weak CSSSP with the same runtime as plain SSSP in the weighted setting - using Dijkstra's algorithm with Fibonacci heaps, a runtime of $O(m + n\log n) = \Oish(m)$. We give each non-gray edge weight 1, and each gray edge weight $1+g^{-1}$. We run Dijkstra's algorithm with these weights and report the paths it computes. The intuition behind this approach is that we ``punish" gray edges by a value that ensures both the g-optimality condition and $\leq 5g$ gray edges in the paths Dijkstra reports. The punishment is big enough so that it is impossible for the reported path to have too many gray edges without violating the fact that a g-short $s\leadsto t$ path exists, and small enough so that the total punishment on the $P_g(s,t)$ paths described in the algorithm contract is less than $1$, effectively acting as a tiebreaker between such paths. 

\begin{figure}[h]
\begin{algorithm}[H]
    \label{alg:weak}\caption{Weak CSSSP via Edge Weighting}
    \KwIn{Undirected, unweighted, graph $G=(V,E)$\\
    Source vertex $s\in V$\\
    Set of gray edges $E_g \subseteq E$\\
    Positive integer $g$
    }\vspace{1em}
   
   
    \ForEach{edge $(u,v) \in E$}{
        $w(u,v)\leftarrow 1+g^{-1}$ if $(u,v)$ is gray, and $w(u,v)\leftarrow 1$ otherwise.
    } 

    Run Dijkstra on $s$ with weight function $w$ to get paths $P(s,t)$ for each other $t \in V$\\
    \Return these $P(s,t)$ paths 

\end{algorithm}
\end{figure}

\begin{theorem}
Algorithm \ref{alg:weak} solves weak CSSSP in $O(m + n\log n) = \Oish(m)$ time.
\end{theorem}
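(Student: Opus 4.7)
The plan is to verify the runtime and the two correctness requirements separately. The runtime is essentially immediate: rescaling the $m$ edge weights takes $O(m)$ time, and Dijkstra's algorithm with Fibonacci heaps on the reweighted graph runs in $O(m + n\log n)$ time. So the entire burden falls on showing that the path Algorithm \ref{alg:weak} returns satisfies the g-optimality condition and the $\leq 5g$ gray-edge bound whenever a g-short $s \leadsto t$ path exists.

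Throughout, let $P = P(s,t)$ denote the path Dijkstra returns, let $h(\cdot)$ count gray edges along a path, and set $w(R) = |R| + g^{-1} h(R)$ for any path $R$, which is exactly its weighted length under the weights of Algorithm \ref{alg:weak}. For g-optimality, I would take any $s \leadsto t$ path $P_g(s,t)$ with $h(P_g) < g$ and apply Dijkstra's optimality: $w(P) \leq w(P_g) = |P_g| + g^{-1} h(P_g) < |P_g| + 1$, since the total gray penalty on $P_g$ is $g^{-1} h(P_g) < 1$. Combined with $|P| \leq w(P)$ and the fact that $|P|, |P_g|$ are integers, this forces $|P| \leq |P_g|$.

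For the $\leq 5g$ bound I would use the g-short path $Q$ guaranteed by the precondition. Writing $Q = (s,s') \circ \pi_G(s',t') \circ (t',t)$, the triangle inequality gives $|\pi_G(s',t')| = \dist_G(s',t') \leq \dist_G(s,t) + 2$, hence $|Q| \leq \dist_G(s,t) + 4$. Since $h(Q) < g$, we have $w(Q) < |Q| + 1 \leq \dist_G(s,t) + 5$, and then
\[
\dist_G(s,t) + g^{-1} h(P) \;\leq\; |P| + g^{-1} h(P) \;=\; w(P) \;\leq\; w(Q) \;<\; \dist_G(s,t) + 5,
\]
so $g^{-1} h(P) < 5$, i.e.\ $h(P) < 5g$.

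The main obstacle, to my mind, is not the algebra above but rather confirming that the weight $1 + g^{-1}$ is simultaneously calibrated on both ends of the argument. The $g^{-1}$ penalty has to be small enough that aggregated over $<g$ gray edges it stays strictly below $1$, so that integrality breaks ties in favor of fewer real edges and gives g-optimality; yet it has to be large enough that $5g$ gray edges cost $+5$ in weighted length, which strictly exceeds the $+4$ detour slack afforded by any g-short path and would thus contradict Dijkstra's optimality against $Q$. The g-short precondition is exactly what pins down this $+4$ slack; without such a $Q$, no constant-factor bound on $h(P)$ follows from Dijkstra alone, which is why the weak CSSSP contract is weakened precisely in this way.
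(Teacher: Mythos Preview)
Your proof is correct and follows essentially the same approach as the paper's: both use the decomposition $w(R) = |R| + g^{-1}h(R)$, derive g-optimality from $w(P) \leq w(P_g) < |P_g| + 1$ together with integrality, and bound the gray-edge count by comparing $w(P)$ against the weight of the g-short witness, which is at most $\dist_G(s,t)+5$. The only cosmetic difference is that the paper argues the $5g$ bound by contradiction (assuming $h(P)>5g$ and deriving $|P|<\dist_G(s,t)$), whereas you chain the inequalities directly to get $g^{-1}h(P)<5$; the content is identical.
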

\begin{proof}
The time complexity follows immediately from the complexity of Dijkstra's algorithm, which is the dominant stage of the algorithm. We now prove correctness.\\

\noindent
Let $t \in V$ and suppose a g-short $s\leadsto t$ path $P'(s,t)=(s,s') \circ \pi_G(s',t') \circ (t',t)$ exists. We will first show that $P(s,t)$ has $\leq 5g$ gray edges. Suppose to show a contradiction that it has $>5g$ gray edges. Note by construction that the weight of a path is its length plus $g^{-1}$ times the number of gray edges. Thus  $w(P(s,t)) > |P(s,t)| + g^{-1}\cdot 5g = |P(s,t)| + 5$. Furthermore, $w(P'(s,t)) < |P'(s,t)| + g^{-1} \cdot g = |P'(s,t)| + 1$. But we also have, by the fact that $P(s,t)$ is the lowest-weight $s \leadsto t$ path, that $w(P(s,t)) \leq w(P'(s,t))$, and thus 

\begin{align*}
    |P(s,t)|+5 &< |P'(s,t)| + 1\\
    &\leq |\pi_G(s',t')| + 2 + 1\\
    & \leq |\pi_G(s,t)| +2+ 2 + 1\\
    &= \dist_G(s,t)+5
\end{align*}
\noindent
This implies that $|P(s,t)| < \dist_G(s,t)$, which is a contradiction. Thus the computed path $P(s,t)$ has $\leq 5g$ gray edges. We now show the g-optimality condition to complete the proof: let $P_g(s,t)$ be an arbitrary $s\leadsto t$ path on $<g$ gray edges. We have that 

\begin{align*}
    |P(s,t)| &\leq w(P(s,t))\\
    &\leq w(P_g(s,t))\\
    & < |P_g(s,t)| + g\cdot g^{-1}\\
    &=|P_g(s,t)|+1
\end{align*}

\noindent
Thus $|P(s,t)| \leq |P_g(s,t)|$ as required.

\end{proof}

\subsection{Application to 4-Additive Spanner Construction}

We are now ready to state our modification to Chechik's spanner construction. Two insights allow us to improve the efficiency: (i) instead of finding the constrained shortest paths between the clusters, it is sufficient to only do this for paths between $S_2$ nodes. Furthermore, (ii) it is sufficient to compute the weak CSSSP paths for this task. The constraint we place on the paths we find will be a constraint on ``heavy edges", which are edges adjacent to heavy nodes

\begin{definition}[heavy edge]
An edge $(u,v)$ is called \textbf{heavy} if either $u$ or $v$ are heavy nodes (having degree $\geq \mu$).
\end{definition}

Besides these changes, the construction is the same as Chechik's original construction. We note that (i) alone was used by Ahmed et al. in their spanner construction for \textit{weighted} graphs \cite{ahmed2021additive}; combining (i) with AliAbdi et al.'s algorithm for CSSSP \cite{AliAbdi2019} yields an $\Oish(mn^{4/5})$ run time, though this was not known to the authors at the time [Personal Communication]. (ii) is a novel method and is responsible for the next polynomial step down to $\Oish(mn^{3/5})$. We now prove our main result through the following series of lemmas: 

\begin{figure}[h]
\begin{algorithm}[H]
    \label{alg:best}\caption{Faster 4-Additive Spanner Construction}
    \KwIn{$n$-node graph $G=(V,E)$}\vspace{1em}
    $E'=$ All edges incident to light nodes\\
    Sample a set of nodes $S_1$ at random, every node with probability $9\mu/n$\\
    \ForEach{node $x\in S_1$}{
        Construct a BFS tree $T(x)$ rooted at $x$ spanning all vertices in $V$\\
        $E'=E'\cup E(T(x))$\\
    }\vspace{1em}
    Sample a set of nodes $S_2$ at random, every node with probability $1/\mu$\\
    \ForEach{heavy node $x$ so that $(\{x\}\cup\Gamma_G(x))\cap S_2 = \varnothing$}{
        Add all incident edges of $x$ to $E'$\\
    }
    \ForEach{heavy node $v$ so that $v\not\in S_2$ and $\Gamma_G(v)\cap S_2\neq\varnothing$}{
        Arbitrarily choose one node $x\in\Gamma_G(v)\cap S_2$\\
        $E' = E' \cup \{(x,v)\}$\\
    }
    
    \ForEach{node $x_1 \in S_2$}{
        Compute weak CSSSP on $G$, with $g = \mu^3/n+2$, $x$ as the source vertex, and $E_g$ as the set of heavy edges, to get paths $P(x_1,x_2)$ for each $x_2 \in V$.\\
        Add $E(P(x_1,x_2))$ to $H$ for each $x_2 \in S_2$
    }

    \Return $H=(V,E')$
\end{algorithm}
\end{figure}


\noindent
The first lemma is standard:

\begin{lemma}[\cite{chechik2013new, Baswana2010}]
For a shortest path $\pi_G(s,t)$ in a graph $G$ and for any vertex $v \in V(G)$, $v$ has $\leq 3$ neighbors in $\pi_G(s,t)$.
\end{lemma}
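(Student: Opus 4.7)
The plan is to argue by contradiction in the most direct way possible. Suppose some vertex $v \in V(G)$ has at least four neighbors on $\pi_G(s,t)$, and list them in the order they appear along the path as $u_1, u_2, u_3, u_4$. The subpath of $\pi_G(s,t)$ from $u_1$ to $u_4$ uses at least three edges, so it has length at least $3$. On the other hand, since each $u_i$ is a neighbor of $v$, the two-edge walk $u_1 \to v \to u_4$ is a valid $u_1 \leadsto u_4$ path in $G$ of length $2$.

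Now I would splice this shortcut into $\pi_G(s,t)$: replace the $u_1 \leadsto u_4$ subpath by the detour through $v$. This produces a new $s \leadsto t$ walk in $G$ of length at most $|\pi_G(s,t)| - 3 + 2 = |\pi_G(s,t)| - 1$, which is strictly less than $\dist_G(s,t)$. That contradicts the fact that $\pi_G(s,t)$ is a shortest path, so no vertex can have four or more neighbors on $\pi_G(s,t)$, giving the bound of $\leq 3$.

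There is essentially no obstacle beyond being careful with the index counting (making sure that ``at least four neighbors'' really does force the subpath length to be at least $3$, which follows because the $u_i$'s are distinct and appear in order on a simple path). One small point to note for cleanliness is that the detour walk through $v$ need not be a simple path if $v$ itself lies on $\pi_G(s,t)$, but this does not affect the length comparison; we only need a walk that is strictly shorter to contradict minimality of $\dist_G(s,t)$.
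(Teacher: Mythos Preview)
Your proof is correct and essentially identical to the paper's: both assume four ordered neighbors $u_1,\dots,u_4$ of $v$ on $\pi_G(s,t)$, observe that the $u_1\leadsto u_4$ subpath has length at least $3$, and derive a contradiction from the length-$2$ detour $u_1\to v\to u_4$. The paper phrases the contradiction as $\dist_G(u_1,u_4)\geq 3$ versus the length-$2$ path, whereas you spell out the splicing into $\pi_G(s,t)$ explicitly, but this is the same argument.
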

\begin{proof}
Suppose to show a contradiction that $v$ has four neighbors $u_1,u_2,u_3,u_4 \in \pi_G(s,t)$, and assume WLOG $u_1 \prec u_2 \prec u_3 \prec u_4$ in $\pi_G(s,t)$. This implies $\dist_G(u_1,u_4) \geq 3$. But the path $(u_1,v) \circ (v,u_4)$ has length $2$, which is a contradiction.
\end{proof}

We now show that for any two nodes $s,t$ of $H$, we have with very high probability that $\dist_H(s,t) \leq \dist_G(s,t)+4$. Note that it's sufficient to prove this result when $s,t$ don't have all of their edges included in $H$. We call such nodes ``uncovered". This is because when $s,t$ are not both uncovered, it's enough to demonstrate this stretch for the subpath of $\pi(s,t)$ beginning and ending at the first and last uncovered nodes respectively.

\begin{definition}
A node $v \in V(G)$ is said to be ``covered" in $H$ if all of its edges are included in $H$.
\end{definition}

This is because if $s,t$ were not both covered, we could let $s'$ be the first uncovered node on $P(s,t)$ and $t'$ the last uncovered node, and then $\dist_H(s',t') \leq \dist_G(s',t')+4 \implies \dist_H(s,t) \leq \dist_G(s,t)+4$. This is because all the edges of $P(s,t)$ occurring before $s'$ or after $t'$ are already in $H$, so any stretch must occur between $s'$ and $t'$. This allows us to assume that $s,t$ are in $\Gamma_G(S_2)$, as all other nodes are covered by our algorithm. The proof of the following lemma is identical to the first part of the proof of Lemma 2.2 in \cite{chechik2013new}, but we repeat it here for completeness.


\begin{lemma}
For any two uncovered nodes $s,t\in V(G)$ such that the canonical shortest path $\pi_G(s,t)$ has $>\mu^3/n$ heavy nodes, we have $\dist_H(s,t) \leq \dist_G(s,t)+4$ with probability $\geq 1-\frac{1}{n^3}$
\end{lemma}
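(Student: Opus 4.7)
The plan is to exploit the BFS trees rooted at $S_1$. Since these trees are included in $H$, if some sampled $x \in S_1$ happens to lie in the neighborhood of a heavy node on $\pi_G(s,t)$, we immediately get a short $s \leadsto x \leadsto t$ path in $H$. Concretely, if $x \in \Gamma_G(v)$ for some heavy $v \in \pi_G(s,t)$, then concatenating the tree paths gives
\[
\dist_H(s,t) \leq \dist_G(s,x) + \dist_G(x,t) \leq (\dist_G(s,v)+1) + (1+\dist_G(v,t)) = \dist_G(s,t) + 2 \leq \dist_G(s,t)+4.
\]
So the entire task reduces to bounding the probability that $S_1$ misses every neighbor of every heavy node of $\pi_G(s,t)$.

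Next I would bound the size of the union of neighborhoods of heavy nodes on $\pi_G(s,t)$. Let $H(\pi)$ be the heavy nodes on $\pi_G(s,t)$; by hypothesis $|H(\pi)| > \mu^3/n$, and each such $v$ has $|\Gamma_G(v)| \geq \mu$. The key observation, from the preceding Lemma 2 that any vertex has at most $3$ neighbors on a shortest path, is that a fixed $u \in V$ can belong to $\Gamma_G(v)$ for at most $3$ heavy nodes $v$ on $\pi_G(s,t)$. Hence
\[
\Bigl|\bigcup_{v \in H(\pi)} \Gamma_G(v)\Bigr| \;\geq\; \frac{1}{3}\sum_{v \in H(\pi)} |\Gamma_G(v)| \;>\; \frac{\mu^4}{3n}.
\]

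Now I would carry out the concentration calculation. Each vertex is independently put in $S_1$ with probability $9\mu/n$, so the probability that $S_1$ misses the above union is at most
\[
\bigl(1 - 9\mu/n\bigr)^{\mu^4/(3n)} \leq \exp\!\bigl(-3\mu^5/n^2\bigr).
\]
The parameter $\mu = \lceil n^{2/5}\log^{1/5} n\rceil$ is chosen precisely so that $\mu^5/n^2 \geq \log n$, giving a failure probability of at most $e^{-3\log n} = 1/n^3$. On the complementary event, some sampled $x \in S_1$ neighbors a heavy node on $\pi_G(s,t)$, and the triangle-inequality argument in the first paragraph yields $\dist_H(s,t) \leq \dist_G(s,t) + 2$, well within the claimed $+4$.

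The only mildly nontrivial step is the neighborhood-counting bound, which requires invoking Lemma 2 to prevent the crude lower bound $|H(\pi)|\mu$ from being overcounted; the remaining ingredients are a single standard union/exponential tail computation together with the definition of $\mu$. The uncovered hypothesis on $s,t$ is not actively used in this particular case: it is only needed in the global reduction explaining why it suffices to prove the stretch bound between uncovered endpoints. Note also that the stretch obtained here is only $+2$; the extra slack of $+4$ in the claim will be consumed by the complementary case in which $\pi_G(s,t)$ has few heavy nodes and the construction routes $s$ and $t$ through their cluster centers via the weak CSSSP paths.
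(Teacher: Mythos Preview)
Your proposal is correct and follows essentially the same argument as the paper: both use the $\leq 3$-neighbors-on-a-shortest-path lemma to lower-bound the neighborhood of the heavy nodes on $\pi_G(s,t)$ by $\mu^4/(3n)$, then bound the probability that $S_1$ misses this set and finish with the $+2$ detour through the BFS tree root. The only cosmetic difference is that the paper writes the tail bound via $(1-1/x)^x<1/e$ while you use $1-p\le e^{-p}$, and your remark that the ``uncovered'' hypothesis is unused in this case is accurate.
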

\begin{proof}
In this case, we claim that there is a $\geq 1-1/n^3$ probability that $\pi_G(s,t)$ is adjacent to a BFS tree in $H$. $\pi_G(s,t)$ has $>\mu^3/n$ heavy nodes, each of degree $\geq \mu$. Thus the sum of the degrees of nodes on $\pi_G(s,t)$ is $> \mu^4/n$. By Lemma 5, this implies there are at least $\mu^4/3n$ nodes adjacent to $\pi_G(s,t)$. Each node $v$ has probability $9\mu/n$ of being included in $S_1$, and thus having a shortest-path tree rooted at $v$ in $H$. Therefore the probability that \textit{none} of these nodes adjacent to $\pi_G(s,t)$ have such a tree rooted at them is
   \begin{align*}
       &\leq (1-9\mu/n)^{\mu^4/3n}\\
       &=(1-9\log^{1/5}n/n^{3/5})^{n^{3/5}\log^{4/5}n/3}\\
       &=(1-9\log^{1/5}n/n^{3/5})^{(n^{3/5}/9\log^{1/5}n)\cdot 3\log n}\\
       &\leq \left(\frac{1}{e}\right)^{3 \log n}\\
       &\leq 1/n^3
   \end{align*}
   where we used the fact that $(1-\frac{1}{x})^x < 1/e$ for $x\geq 1$. Thus, we have a $> 1-1/n^3$ probability of the existence of a node $r$ neighboring some $u \in \pi_G(s,t)$ such that a BFS tree rooted at $r$ is in $H$. When this is the case, we can simply take the $s\leadsto r$ followed by the $r \leadsto t$ shortest paths provided by the BFS tree, which has a stretch factor of 2 as shown below:
 \begin{align*}
    \dist_H(s,t) &\leq \dist_H(s,r) + \dist_H(r,t)\\
    &= \dist_G(s,r) + \dist_G (r,t)\\
    &\leq \dist_G(s,u) + 1 + \dist_G (u,t)+1\\
    &= \dist_G(s,t)+2
   \end{align*}

\end{proof}

\begin{lemma}
For any two uncovered nodes $s,t\in V(G)$ such that the canonical shortest path $\pi_G(s,t)$ has $\leq \mu^3/n$ heavy nodes, we have $\dist_H(s,t) \leq \dist_G(s,t)+4$ with probability $\geq 1-\frac{1}{n^3}$.
\end{lemma}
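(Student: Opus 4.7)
The plan is to route an $s \leadsto t$ path through the $S_2$ cluster representatives attached to $s$ and $t$ during the clustering stage and then invoke the weak CSSSP guarantee (Theorem 4) to bound the detour length. First I observe that if $s$ and $t$ are uncovered, each must be heavy (otherwise all its incident edges were added in the light-node stage) and each must be adjacent to some vertex of $S_2$ (otherwise the ``heavy-but-$S_2$-isolated'' branch of Algorithm 3 would have added all its incident edges). Let $x_1 \in S_2 \cap (\{s\} \cup \Gamma_G(s))$ and $x_2 \in S_2 \cap (\{t\} \cup \Gamma_G(t))$ be the representatives chosen by the algorithm, so that the edges $(s,x_1)$ and $(t,x_2)$ already lie in $E(H)$.

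\noindent
Next I would construct the explicit candidate path
$$P' \;=\; (x_1,s) \circ \pi_G(s,t) \circ (t,x_2),$$
which is an $x_1 \leadsto x_2$ path matching the form required by Definition 5 (with the inner shortest path taken between $s$ and $t$). Because $\pi_G(s,t)$ has at most $\mu^3/n$ heavy nodes and each node of a simple path is incident to at most two path-edges, the number of heavy edges on $P'$ is $O(\mu^3/n)$, hence less than $g$ (with $g$ chosen to be $\Theta(\mu^3/n)$ as in Algorithm 3). Thus $P'$ is a valid g-short path, so by Theorem 4 the weak CSSSP call seeded at $x_1$ returns a path $P(x_1,x_2)$ with $|P(x_1,x_2)| \leq |P'|$, and every edge of $P(x_1,x_2)$ is added to $H$ by the algorithm.

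\noindent
The claimed stretch then follows from the triangle inequality in $H$:
\begin{align*}
\dist_H(s,t) &\leq \dist_H(s,x_1) + \dist_H(x_1,x_2) + \dist_H(x_2,t) \\
&\leq 1 + |P(x_1,x_2)| + 1 \\
&\leq |P'| + 2 \\
&= \dist_G(s,t) + 4.
\end{align*}
The argument is deterministic once $s$ and $t$ are uncovered, so the $\geq 1 - 1/n^3$ probability bound in the statement is automatic and mirrors the template of Lemma 7.

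\noindent
The step I expect to be the main obstacle is the precondition check for Theorem 4: one must verify that $P'$ really has $<g$ heavy edges. Since a simple path with $h$ heavy nodes can carry up to $2h$ heavy edges (e.g.\ when heavy and non-heavy vertices alternate), a literal reading of $g = \mu^3/n+2$ is tight only up to a factor of two. This is resolved either by enlarging $g$ to $2\mu^3/n+O(1)$ in Algorithm 3 or by exploiting finer structural properties of canonical shortest paths; in either case it is a constant-factor adjustment that does not affect the $\Oish(mn^{3/5})$ runtime.
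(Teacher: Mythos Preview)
Your proposal is correct and follows essentially the same route as the paper: pick $x_1,x_2\in S_2$ adjacent to the uncovered $s,t$, exhibit the g-short witness $(x_1,s)\circ\pi_G(s,t)\circ(t,x_2)$, invoke the weak CSSSP guarantee to bound $|P(x_1,x_2)|$, and finish with the triangle inequality; the argument is indeed deterministic. Your closing worry about the factor of two in the heavy-edge count is well placed---the paper handles it by (locally, and somewhat inconsistently with Definition~7) declaring an edge heavy only when \emph{both} endpoints are heavy, so that $\leq\mu^3/n$ heavy nodes force $<\mu^3/n$ such edges; as you observe, doubling $g$ fixes this under either convention without affecting any asymptotic bound.
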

\begin{proof}
Both $s,t$ are uncovered and thus in $\Gamma_G(S_2)$. Let $x_1,x_2 \in S_2$ such that $s \in \Gamma_H(x_1)$ and $t \in \Gamma_H(x_2)$. We assume $x_1 \neq x_2$ as this case is trivial.\\

Call an edge $(u,v)$ ``heavy" if both $u$ and $v$ are heavy nodes. Since $\pi_G(s,t)$ is a (shortest) path on $\mu^3/n$ heavy nodes, it has $<\mu^3/n = g-2$ heavy edges, meaning $P'(x_1,x_2) := (x_1,s) \circ \pi_G(s,t) \circ \pi_G(t,x_2)$ has $<g$ gray edges and is thus a g-short path\footnote{If $x_1=s$ or $x_2=t$, $(x_1,s)$ (resp. $(t,x_2)$) are by convention empty paths.}. Thus when we compute the weak CSSSP path $P(x_1,x_2)$, we have that $|P(x_1,x_2)| \leq |P'(x_1,x_2)|$ . Furthermore, $|P'(x_1,x_2)| \leq \pi_G(s,t)+2 \leq \dist_G(x_1,x_2)+4$, since $\pi_G(s,t)$ is a shortest path in $G$. Let $P(s,t) = (s,x_1) \circ P(x_1,x_2) \circ (x_2,t)$, and note that $P(s,t)$ is a path in $H$. This path witnesses that

\begin{align*}
    \dist_H(s,t) &\leq 2 + |P(x_1,x_2)|\\
    &\leq 2 + |P'(x_1,x_2)|\\
    &\leq 2 + 2 + |\pi_G(s,t)|\\
    &= \dist_G(s,t)+4
\end{align*}

Thus we deterministically have the required stretch for such node pairs.

\end{proof}

Correctness follows by the above two lemmas and the union bound, which gives us that with probability $>1-1/n$, $\dist_H(s,t) \leq \dist_G(s,t) + 4$ holds for all $s,t \in V(G)$.

\begin{lemma}
The subgraph $H$ produced by Algorithm \ref{alg:best} has $O(n\mu) = \Oish(n^{7/5})$ edges with high probability. 
\end{lemma}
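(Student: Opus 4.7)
The plan is to decompose $E' = E(H)$ into the five contributions of Algorithm \ref{alg:best} and bound each by $O(n\mu)$ with high probability. I will use throughout that, since $\mu = \lceil n^{2/5}\log^{1/5}n\rceil \gg \log n$, standard Chernoff bounds give $|S_1| = O(\mu)$ and $|S_2| = O(n/\mu)$ with high probability.

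The bounds for stages (1), (2), and (4) are straightforward. Every light node has degree $< \mu$ by definition, so stage (1) deterministically contributes at most $n\mu$ edges. Stage (2) adds at most $|S_1|\cdot(n-1) = O(n\mu)$ edges with high probability. Stage (4) adds exactly one edge per heavy non-$S_2$ node, for a deterministic $O(n)$. For stage (3), I would argue in expectation first: a heavy node $x$ of degree $d\geq\mu$ contributes its $d$ incident edges precisely when $\{x\}\cup\Gamma_G(x)$ misses $S_2$, an event of probability at most $(1-1/\mu)^{d+1}\leq e^{-d/\mu}$. Since the function $d \mapsto d\,e^{-d/\mu}$ is maximized at $d = \mu$ with value $\mu/e$ and decreasing for $d\geq\mu$, the expected contribution per heavy node is $O(\mu)$, summing to $O(n\mu)$ in total.

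The main content of the proof is stage (5), where I would exploit the weak CSSSP guarantee together with stage (1). By Theorem 4 each path $P(x_1,x_2)$ has at most $5g = O(\mu^3/n)$ heavy edges (with $g = \mu^3/n + 2$). The key observation is that every non-heavy edge has both endpoints light and hence is already in $E'$ from stage (1); therefore the only \emph{new} edges contributed by stage (5) lie among the $\leq 5g$ heavy edges per path. Multiplying over the $|S_2|^2$ pairs yields a new-edge contribution of
\[
|S_2|^2 \cdot 5g \;=\; O(n^2/\mu^2) \cdot O(\mu^3/n) \;=\; O(n\mu)
\]
with high probability. Summing the five contributions gives $O(n\mu) = \widetilde{O}(n^{7/5})$.

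The main obstacle I anticipate is upgrading the stage (3) expectation bound to a genuine high-probability bound: heavy nodes of degree near $\mu$ are isolated from $S_2$ with constant probability, so the events are not individually unlikely. This would need a concentration argument for the non-independent sum $\sum_{x\text{ heavy}} d(x)\cdot\mathbb{1}[x\text{ isolated}]$ viewed as a function of the random set $S_2$ (e.g.\ a bounded-differences inequality), or the standard workaround of applying Markov and running the construction $O(\log n)$ times to select the smallest output. Every other bound in the decomposition is either deterministic or a direct Chernoff consequence.
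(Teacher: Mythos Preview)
Your decomposition and the per-stage bounds match the paper's proof essentially verbatim; the paper also collapses your stages (3) and (4) but otherwise argues exactly as you do, including the key observation in stage (5) that all non-heavy edges were already added in stage (1).

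Regarding your flagged obstacle in stage (3): the paper sidesteps the concentration issue by a degree-threshold argument rather than bounded differences or repetition. Namely, any heavy node $v$ with $\deg(v) \geq c\mu\log n$ satisfies $\Pr[v \notin \Gamma_G(S_2)] \leq (1-1/\mu)^{c\mu\log n} \leq n^{-c}$, so a union bound over all such $v$ shows they are \emph{all} covered with high probability and contribute nothing. The remaining heavy nodes have $\mu \leq \deg(v) < c\mu\log n$ and each contributes at most $c\mu\log n$ edges \emph{deterministically}, for a total of $O(n\mu\log n) = \widetilde{O}(n^{7/5})$. This is simpler than the concentration route you sketch and keeps the bound at a single run.
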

\begin{proof}
We can separate the addition of edges to $H$ into 4 types:

\begin{enumerate}
    \item The edges incident to light nodes are added. Each light node is incident to $\leq \mu$ edges by definition, so $O(n\mu) = \widetilde{O}(n^{7/5})$ edges are added.
    
    \item The BFS tree of each node in $S_1$ is added. Each such tree contributes $O(n)$ edges. The probability of a node being added to $S_1$ is $9\mu/n$, so $|S_1|=\Theta(\mu)$ with high probability, and thus $O(\mu \cdot n) = \widetilde{O}(n^{7/5})$ edges are added with high probability.
    
    \item Edges adjacent to heavy nodes $v$ that are  $\notin \Gamma_G(S_2)$ are added. Nodes are added to $S_2$ with probability $1/\mu$, thus the probability of $v$ being neither in $S_2$ nor adjacent to a node in $S_2$ is $\leq (1-1/\mu)^{\deg(v)+1}$. If $\deg(v) = \Omega(\mu \log n)$, then it is adjacent to a node in $S_2$ with high probability. Thus the number of edges added for $v$ is at most $1+\deg(v)(1-1/\mu)^{\deg(v)} < \mu$ with high probability. Unioning over all $v$, this adds $O(n\mu)=\widetilde{O}(n^{7/5})$ edges with high probability.
    
    \item Edges on paths between $S_2$ nodes with $\leq 5\mu^3/n$ heavy edges are added. $|S_2| = \Theta(n/\mu)$ with high probability, yielding $\Theta(n^2/\mu^2)$ pairs of $S_2$. All the light edges (edges adjacent to light nodes) have already been added to $H$, so each path between these $S_2$ pairs adds at most $\Theta(\mu^3/n)$ edges. Unioning over the number of pairs, this adds $O(\mu n)=\widetilde{O}(n^{7/5})$ edges with high probability. \qedhere
\end{enumerate}







\end{proof}

\begin{lemma}
On an $n-$node $m-$edge input graph, Algorithm \ref{alg:best} runs in $\Oish(mn^{3/5})$ with high probability.
\end{lemma}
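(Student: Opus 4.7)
The plan is to walk through Algorithm~\ref{alg:best} stage by stage, bound each stage's running time, and observe that the weak CSSSP stage dominates at $\Oish(mn^{3/5})$. Since Theorem~4 gives us an $\Oish(m)$ solution to weak CSSSP, the analysis reduces to bounding the sizes of the sampled sets $S_1$ and $S_2$ via a Chernoff bound, which is standard.

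First I would handle the preprocessing: enumerating all edges incident to light nodes can be done in $O(m)$ time by a single degree-computing pass, and the same applies to the steps that add the edges incident to heavy nodes not adjacent to $S_2$ and the arbitrary cluster-center selections. These together contribute $O(m+n)$. Next I would analyze the BFS stage: each node enters $S_1$ independently with probability $9\mu/n$, so a Chernoff bound gives $|S_1| = O(\mu \log n) = \Oish(n^{2/5})$ with high probability. Each BFS runs in $O(m+n)$ time, yielding $\Oish(m\mu) = \Oish(mn^{2/5})$ for this stage with high probability.

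The main stage is the weak CSSSP loop. Each node enters $S_2$ independently with probability $1/\mu$, so another Chernoff bound gives $|S_2| = O(n/\mu)$ with high probability. For each $x_1 \in S_2$, we invoke Algorithm~\ref{alg:weak} once, with $g = \mu^3/n + 2$, source $x_1$, and $E_g$ the set of heavy edges. By Theorem~4, each call runs in $O(m + n\log n) = \Oish(m)$ time, independent of $g$. Summing over $S_2$ yields
\[
\Oish\!\left(\frac{n}{\mu} \cdot m\right) = \Oish\!\left(\frac{mn}{n^{2/5}}\right) = \Oish(mn^{3/5})
\]
with high probability. Finally, inserting the edges of the computed paths $P(x_1, x_2)$ into $H$ can be charged against the $\Oish(n^{7/5})$ size bound from Lemma~10, so this bookkeeping is absorbed.

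There is no real obstacle here: the only mildly delicate points are verifying that the high-probability bounds on $|S_1|$ and $|S_2|$ follow from Chernoff (which requires just that the expectations $\Theta(\mu)$ and $\Theta(n/\mu)$ are at least logarithmic in $n$, which they are for our choice $\mu = \lceil n^{2/5}\log^{1/5} n \rceil$), and noting that the $\Oish(m)$ runtime of weak CSSSP genuinely has no dependence on the constraint parameter $g$. Adding up the contributions $O(m)$, $\Oish(mn^{2/5})$, $O(m)$, and $\Oish(mn^{3/5})$, the final stage dominates and the total runtime is $\Oish(mn^{3/5})$ with high probability.
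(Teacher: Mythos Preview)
Your proposal is correct and follows essentially the same approach as the paper: identify the BFS stage and the weak CSSSP loop as the only superlinear parts, bound $|S_1|=\Oish(n^{2/5})$ and $|S_2|=\Oish(n^{3/5})$ with high probability, and multiply by the $\Oish(m)$ cost per call (Theorem~4). One small slip: the size bound you invoke for the edge-insertion bookkeeping is Lemma~8 in the paper, not Lemma~10.
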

\begin{proof}

The only two superlinear stages of the algorithm are (a) the generation of the $S_1$ Breadth-First Search trees, and (b) solving weak CSSSP for each node of $S_2$. For (a): nodes are sampled to be in $S_1$ with probability $9\mu/n$, so $|S_1| = O(\mu) = \Oish(n^{2/5})$ with high probability. BFS has worst-case runtime $O(m)$. Thus this stage is $O(m\mu)=\Oish(mn^{2/5})$ time. For (b): we showed in section 2.1 an algorithm that solves weak CSSSP in $\Oish(m)$ time, which we run for each node of $S_2$. Multiplying this over the size of $S_2$ (which has size $\Oish(n^{3/5})$ with high probability), we get $\Oish(mn^{3/5})$ time with high probability.


\end{proof}

\noindent
Theorem 1 now follows from Lemmas 6-9.

\section{Weighted +4 Spanner}


In this section, we prove Theorem 2 by first synthesizing a weighted analogue of the weak CSSSP problem from section 2.1, then we apply this in a similar fashion to create our $+4W(s,t)+\epsilon W$ construction. We note that a $+4W$, $\Oish(n^{7/5})$ edge, $\Oish(mn^{3/5})$ time spanner construction is a very straightforward generalization of the methods presented in the unweighted part of this paper. However, $W(s,t)$ error is preferable (except in the case when all the edge weights are equal).


\subsection{Weighted Constrained Shortest Paths}

In our first step towards our weighted spanner, we will generalize weak CSSSP from section 2.1 to the weighted setting, and give a Dijkstra-time algorithm for solving it. This is also where the $\epsilon$ error of the construction will be incurred, so our generalization will incorporate an error parameter. There are many ways to generalize weak CSSSP to the weighted setting, but we choose this one as it is what we've been able to find a use for in the weighted spanner construction. 

\begin{definition}[Weighted Weak CSSSP (with error)] The weighted weak constrained single-source shortest paths problem with error is defined by the following algorithm contract:

\begin{itemize}
    \item \textbf{Input:} An (undirected) graph $G=(V,E)$, a set of ``gray" edges $E_g \subset E$, a source vertex $s \in V$, a weight function $w:V\to \mathbb{R}^+$, an error parameter $1>\epsilon>0$, and a positive integer $g$.
    
    \item \textbf{Output:} For every $t \in V$, a path $P(s,t)$ on $\leq 5g/\epsilon$ gray edges, satisfying the following: 
    \begin{itemize}
        \item If an $s\leadsto t$ g-short \footnote{We reuse the same definition from section 2.1.} path exists, then the \textbf{near g-optimality condition} holds: $w(P(s,t)) < w(P_g(s,t)) + \epsilon W$ for any $s\leadsto t$ path $P_g(s,t)$ on $<g$ gray edges, where $W = \max_{e \in E}w(E)$.
        
        \item If no such paths exist, $P'(s,t)$ can be anything.
        
    \end{itemize}

\end{itemize}
\end{definition}

The only difference between this version of the problem and the unweighted version is that it is in terms of path weight instead of path length, and it allows an $\epsilon$ error in terms of the heaviest edge of the graph. As before, one can use AliAbdi et al's Bi-SPP \cite{AliAbdi2019} to solve this, but it is stronger than necessary. A modification to Algorithm \ref{alg:weak} gives us the fastest solution. It is the same as before, except that we now punish each gray edge by $+\epsilon W g^{-1}$ instead of $+g^{-1}$.

\begin{figure}[h]
\begin{algorithm}[H]
    \label{alg:deltaweighted}\caption{Weighted Weak CSSSP with Error by Edge Punishing}
    \KwIn{Undirected graph $G=(V,E)$\\
    Source vertex $s\in V$\\
    Set of gray edges $E_g \subseteq E$\\
    Weight function $w:E \to \mathbb{R}^{+}$\\
    Error $1>\epsilon>0$\\
    Positive integer $g$
    }\vspace{1em}
   
   Let $W = \max_{e \in E} w(e)$

    \ForEach{edge $(u,v) \in E$}{
        $w'(u,v)\leftarrow w(u,v)+\epsilon W g^{-1}$ if $(u,v)$ is gray, and $w'(u,v)\leftarrow w(u,v)$ otherwise.
    }

    Run Dijkstra's algorithm on $G$ with source node $s$ and weight function $w'$ to get paths $P(s,t)$ for each other $t \in V$\\
    \Return these $P(s,t)$ paths 

\end{algorithm}
\end{figure}


\begin{theorem}
Algorithm \ref{alg:deltaweighted} solves Weighted Weak CSSSP  with error in $O(m + n \log n) = \Oish(m)$ time.
\end{theorem}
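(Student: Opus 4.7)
The plan is to mirror the proof of Theorem 5 almost verbatim, with the unweighted per-edge penalty $g^{-1}$ replaced by the weighted penalty $\epsilon W g^{-1}$ and the target constants $5g$, $+5$, $+1$ scaled appropriately to $5g/\epsilon$, $+5W$, and $+\epsilon W$. The runtime is immediate: the only nontrivial work in Algorithm \ref{alg:deltaweighted} is a single call to Dijkstra's algorithm with Fibonacci heaps on $(G,w')$, which costs $O(m+n\log n) = \Oish(m)$.

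For correctness, fix a target $t \in V$ and assume an $s \leadsto t$ g-short path $P'(s,t) = (s,s') \circ \pi_G(s',t') \circ (t',t)$ exists. The key observation is that by construction, for any path $Q$, $w'(Q) = w(Q) + \epsilon W g^{-1} \cdot (\text{\# gray edges of } Q)$. First I bound the number of gray edges in the computed path $P(s,t)$. Suppose for contradiction it has $> 5g/\epsilon$ gray edges. Then $w'(P(s,t)) > w(P(s,t)) + 5W$, while $w'(P'(s,t)) < w(P'(s,t)) + \epsilon W$ because $P'(s,t)$ has $< g$ gray edges. Dijkstra optimality gives $w'(P(s,t)) \leq w'(P'(s,t))$, so $w(P(s,t)) + 5W < w(P'(s,t)) + \epsilon W$.

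To finish the contradiction I need to bound $w(P'(s,t))$ in terms of $\dist_G(s,t)$. The two added edges $(s,s')$ and $(t',t)$ each have weight at most $W$, and by the triangle inequality $\dist_G(s',t') \leq \dist_G(s,t) + w(s,s') + w(t,t') \leq \dist_G(s,t) + 2W$, so $w(P'(s,t)) \leq \dist_G(s,t) + 4W$. Substituting gives $w(P(s,t)) < \dist_G(s,t) + (\epsilon - 1)W$, which is strictly less than $\dist_G(s,t)$ since $\epsilon < 1$; this contradicts the fact that $P(s,t)$ is a genuine $s \leadsto t$ path in $G$. Finally, for near g-optimality, let $P_g(s,t)$ be any $s \leadsto t$ path with $< g$ gray edges. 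The chain $w(P(s,t)) \leq w'(P(s,t)) \leq w'(P_g(s,t)) < w(P_g(s,t)) + g \cdot \epsilon W g^{-1} = w(P_g(s,t)) + \epsilon W$ gives the desired bound, where the middle inequality is Dijkstra optimality on $w'$.

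The main (quite minor) obstacle is calibrating the penalty so the contradiction for the gray-edge count is tight: the $5W$ slack produced by $5g/\epsilon$ gray edges at penalty $\epsilon W g^{-1}$ each must strictly dominate the $4W$ blowup from the g-short decomposition plus the $\epsilon W$ cushion enjoyed by $P'(s,t)$. Beyond that, every step is a direct weighted analogue of the unweighted argument.
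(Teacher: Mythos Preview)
Your proposal is correct and follows essentially the same approach as the paper. The only cosmetic difference is in how the gray-edge contradiction is closed: the paper bounds $w(P'(s,t)) \leq w(\pi_G(s',t')) + 2W$ and then uses $w(\pi_G(s',t')) \leq w(P(s,t)) + 2W$ to compare $w(P(s,t))$ against itself, whereas you route through $\dist_G(s,t)$ via $w(P'(s,t)) \leq \dist_G(s,t) + 4W$ and $\dist_G(s,t) \leq w(P(s,t))$; both arrive at the same contradiction and the near g-optimality chain is identical.
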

\begin{proof}

The time complexity follows immediately from the complexity of Dijkstra's algorithm, which is the dominant stage of the algorithm. We now prove correctness.

Let $t \in V$ and suppose a g-short $s\leadsto t$ path $P'(s,t)=(s,s') \circ \pi_G(s',t') \circ (t',t)$ exists. We will first show that $P(s,t)$ has $\leq 5g/ \epsilon$ gray edges. Suppose to show a contradiction that it has $>5g/\epsilon$ gray edges. Thus

\begin{align*}
    w(P(s,t)) + 5W &< w'(P(s,t))\\
    &\leq w'(P'(s,t))\\
    &< w(P'(s,t)) + g\cdot (\epsilon W g^{-1}) = w(P'(s,t)) + \epsilon W\\
    &\leq w(P'(s,t)) + W\\
    &\leq w(\pi_G(s',t'))+2W+W
\end{align*}

Therefore $w(P(s,t)) + 4W < w(\pi_G(s',t')) + 2W$. Since $\pi_G(s',t')$ is a shortest path, $w(\pi_G(s',t')) \leq w(P(s,t)) + 2W$. This implies $w(P(s,t)) +4W < w(P(s,t)) + 4W$, a contradiction.\\

Now we prove near g-optimality. Let $P_g(s,t)$ be an arbitrary $s\leadsto t$ path on $<g$ gray edges. Then 

\begin{align*}
w(P(s,t)) &\leq w'(P(s,t))\\
&\leq w'(P_g(s,t))\\ 
&< w(P_g(s,t)) + g \cdot (\epsilon W g^{-1})\\
&= w(P_g(s,t)) + \epsilon W
\end{align*}

as required.

\end{proof}

\subsection{$+4W(\cdot,\cdot)+\epsilon W$ Spanner}

In this section we modify our Section 2 construction to construct $+4W(\cdot,\cdot) + \epsilon W$ spanners, for any $\epsilon > 0$, on $\Oish_\epsilon(n^{7/5}e^{-1})$ edges in $\Oish(mn^{3/5})$ time. This will make use of our weak CSSSP generalization to the weighted setting with error discussed in the previous subsection, which is where the $+\epsilon W$ stretch is incurred.

The construction differs from Algorithm \ref{alg:best} in three important ways: (i): instead of adding all the edges of light nodes to the spanner $H$, we perform a \textbf{$\mu$-lightweight initialization} - that is, we add the $\mu$ lightest edges of each node to $H$ (breaking ties arbitrarily), a technique introduced in \cite{ahmed2021additive}.

\begin{definition}[$d-$lightweight initialization \cite{ahmed2021additive}]
A $d-$lightweight initialization $H = (V,E')$ of a weighted graph $G = (V,E)$ is a subgraph created by selecting the $d$ lightest edges of every node of $G$, breaking ties arbitrarily.
\end{definition}

(ii): Instead of computing standard weak CSSSP (as seen in Section 2.2) on the $S_2$ nodes, we compute our new weighted version with error. (iii): We now omit the step of connecting ``heavy nodes" with nodes of $S_2$. Instead we rely on these connections happening ``naturally" due to our $\mu$-lightweight initialization, which allows us to make use of the properties the lightweight initialization gives us.

\begin{figure}[h]
\begin{algorithm}[H]
    \label{alg:weightedspan}\caption{$+4W(\cdot,\cdot) + \epsilon W$ Spanner}
    \KwIn{$n$-node graph $G=(V,E)$\\
	Weight function $w:E \mapsto \mathbb{R}^+$\\
	Error parameter $1>\epsilon>0$\\
	}\vspace{1em}
    Let $H_0=(V,E')$ be a $\mu$-lightweight initialization of $G$.\\
    
    Sample a set of nodes $S_2$ at random, every node with probability $1/\mu$\\
    \ForEach{node $x$ so that $(\{x\}\cup\Gamma_{H_0}(x))\cap S_2 = \varnothing$}{
        Add all incident edges of $x$ to $E'$\\
    }
    Sample a set of nodes $S_1$ at random, every node with probability $9\mu/n$\\
    \ForEach{node $x\in S_1$}{
        Construct a shortest-path tree $T(x)$ rooted at $x$ spanning all vertices in $V$\\
        $E'=E'\cup E(T(x))$\\
    }\vspace{1em}
    %
    
    \ForEach{node $x_1 \in S_2$}{
        Compute weighted weak CSSSP with error on $G$, with $g = \mu^3/n+2$, $x_1$ as the source vertex, $E_g = E \setminus E'$, and $\epsilon$ as the error parameter, to get paths $P(x_1,x_2)$ for each $x_2 \in V$.\\
        Add $E(P(x_1,x_2))$ to $E'$ for each $t \in S_2$
    }

    \Return $H=(V,E')$
\end{algorithm}
\end{figure}

\noindent
Note that $H_0$ in Algorithm 5 denotes the $\mu-$initialization of $G$, before any additional edges are added to $E'$. We will refer to $H_0$ in our proofs as it will allow us to make use of the fact that the edges added are only the ones form the lightweight initialization. We now prove correctness by the following series of lemmas. We will make use of the following theorem due to Ahmed et al.


\begin{theorem}[\cite{ahmed2020weighted}]
Let $H$ be a $d-$lightweight initialization of an undirected, weighted graph $G$. Then if a shortest path $\pi_G(s,t)$ is missing $l$ edges in $H$, there are $\Omega(dl)$ different nodes adjacent to $\pi_G(s,t)$ in $H$.
\end{theorem}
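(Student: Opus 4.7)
The plan is to construct, for each missing edge, a set of $d$ ``witness'' neighbors reached by light edges, and then bound the multiplicity with which any one vertex can serve as a witness across multiple missing edges. An overlap-bounded counting argument then yields the $\Omega(dl)$ lower bound.

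Let $e_1, \ldots, e_l$ denote the missing edges of $\pi_G(s,t)$ in the order they appear on the path, with $e_i = (u_i, v_i)$ and $u_i$ the endpoint closer to $s$. Since $e_i \notin H$, the definition of $d$-lightweight initialization forces $e_i$ not to be among the $d$ lightest edges incident to $u_i$, so $u_i$ has at least $d$ other neighbors in $G$ reached via edges of weight at most $w(e_i)$, and all such edges lie in $H$. Let $N_i$ be a chosen set of $d$ such light-weight neighbors. The $u_i$'s are distinct vertices on the simple path $\pi_G(s,t)$, so summing the sizes of the witness multisets gives $\sum_{i=1}^l |N_i| = dl$.

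To translate these incidences into $\Omega(dl)$ \emph{distinct} neighbors, I will show that for any vertex $y$ the multiplicity $|\{i : y \in N_i\}|$ is bounded by a constant $c$. Suppose $y$ lies in $N_{i_1}, \ldots, N_{i_k}$ with $i_1 < \cdots < i_k$; then $y$ is $H$-adjacent to each $u_{i_j}$ by an edge of weight at most $w(e_{i_j})$. Using that $\pi_G(s,t)$ is a shortest path, the shortcut $u_{i_a} \to y \to u_{i_b}$ of weight $\leq w(e_{i_a}) + w(e_{i_b})$ cannot strictly undercut the $\pi$-subpath between them, which contains at least the missing edges $e_{i_a}, e_{i_a+1}, \ldots, e_{i_b-1}$ together with any intermediate non-missing edges. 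Applying this constraint across pairs $(i_a, i_b)$ should yield a system of inequalities on the weights $w(e_{i_j})$ that forces $k = O(1)$, analogously to the role Lemma 5 plays in the unweighted setting.

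This final step is the main obstacle. In contrast to Lemma 5, weighted edges allow the $w(e_{i_j})$ to be badly skewed, and a naive shortcut inequality only forces geometric growth in these weights (giving a $k = O(\log W)$ bound on a first pass). To strengthen to $k = O(1)$, I plan to symmetrize the witness sets over both endpoints $u_i$ and $v_i$ of each missing edge, which doubles the count to $2dl$ and enables tighter shortcut inequalities by choosing whichever endpoint produces a stronger contradiction; alternatively, if the symmetric argument still admits unbounded $k$ in pathological weight configurations, I would stratify missing edges by weight class so that within each class the shortcut argument reduces to the unweighted case and Lemma 5 applies directly. Once the multiplicity bound is established, the final conclusion $|N_1 \cup \cdots \cup N_l| \geq dl/c = \Omega(dl)$ follows, and each element of this union is a vertex of $H$ adjacent to some $u_i \in \pi_G(s,t)$, as required.
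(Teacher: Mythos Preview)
First, note that the paper does not actually prove this theorem: it is quoted from \cite{ahmed2020weighted} and used as a black box in the proof of Lemma~12. So there is no ``paper's own proof'' to compare against here.

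That said, your proposed argument has a genuine gap: the $O(1)$ multiplicity bound you are aiming for is simply false. Take a shortest path on which the $i$-th missing edge $e_i$ has weight $2^i$, the non-missing path edges have negligible weight, and a single vertex $y$ is joined to every $u_i$ by an $H$-edge of weight $2^i-\delta$ (one can pad each $u_i$ with a few extra very light dummy neighbors so that $(y,u_i)$ lands among $u_i$'s $d$ lightest while $e_i$ does not). The shortcut $u_i \to y \to u_j$ then has weight roughly $2^i+2^j$, whereas the subpath from $u_i$ to $u_j$ has weight roughly $2^j-2^i$, so the shortest-path property survives and yet $y$ sits in every $N_i$. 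The theorem still \emph{holds} in this example (the dummy neighbors supply $\Omega(dl)$ fresh vertices), but your proof mechanism---bounding per-vertex multiplicity---breaks.

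Neither of your proposed fixes rescues this. Symmetrizing over both endpoints does not help: run the same construction with weights \emph{decreasing} along the path and a single $y$ can witness every $v_i$ instead; there is no uniform choice of endpoint that works for all weight profiles, and with the union $N_i^u\cup N_i^v$ the $u$-side alone already has multiplicity $l$. Stratifying by weight class only shows that within each dyadic class $[2^k,2^{k+1})$ a vertex witnesses $O(1)$ missing edges, but there can be $\Theta(\log(W_{\max}/W_{\min}))$ classes, giving multiplicity $O(\log W)$ and hence only $\Omega(dl/\log W)$ distinct neighbors. The argument in \cite{ahmed2020weighted} avoids a global per-vertex multiplicity bound altogether and instead uses a more careful accounting of how the light-neighbor sets of the two endpoints interact with the subpath weights; reproducing that is what would be needed to close the gap.
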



\begin{lemma}
For any two nodes $s,t\in V(G)$ such that the canonical shortest path $\pi_G(s,t)$ is missing $>\mu^3/n$ edges in $H_0$, we have $\dist_H(s,t) \leq \dist_G(s,t)+4W(s,t)$ with probability $\geq 1-\frac{1}{n^3}$
\end{lemma}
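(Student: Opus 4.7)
The plan is to adapt the argument of Lemma 7 from the unweighted case to the weighted setting: the role previously played by ``heavy nodes'' and Lemma 5 will be played by ``missing edges'' and Theorem 11 of Ahmed et al., and the BFS trees are replaced by the shortest-path trees rooted at $S_1$. The goal is to show that with high probability there is an $r \in S_1$ adjacent (in $H_0$) to some $u \in \pi_G(s,t)$ through a \emph{light} edge, so that the two shortest-path-tree segments $s \leadsto r$ and $r \leadsto t$ in $H$ together witness a small stretch.

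First I would invoke Theorem 11 with $d = \mu$ and $l > \mu^3/n$ to get $\Omega(\mu \cdot \mu^3/n) = \Omega(\mu^4/n)$ distinct nodes adjacent to $\pi_G(s,t)$ in $H_0$. Crucially, these neighbors arise from the $\mu$ lightest edges of endpoints of missing edges of $\pi_G(s,t)$: if $(u,u') \in \pi_G(s,t)$ is missing in $H_0$, then $w(u,u') \leq W(s,t)$, and by the definition of $\mu$-lightweight initialization the $\mu$ lightest edges of $u$ (all retained in $H_0$) have weight at most $w(u,u') \leq W(s,t)$. So for each such neighbor $r$ the witnessing edge $(r,u) \in H_0 \subseteq H$ satisfies $w(r,u) \leq W(s,t)$.

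Next, each of these $\Omega(\mu^4/n)$ distinct neighbors is independently sampled into $S_1$ with probability $9\mu/n$, so the probability that none is sampled is at most $(1 - 9\mu/n)^{\Omega(\mu^4/n)}$. Plugging in $\mu = \lceil n^{2/5} \log^{1/5} n \rceil$ and using $(1 - 1/x)^x \leq 1/e$ exactly as in the computation inside the proof of Lemma 7, this is $\leq 1/n^3$. Conditioned on some such $r$ being sampled, the shortest-path tree $T(r)$ is in $H$, so $\dist_H(s,r) = \dist_G(s,r)$ and $\dist_H(r,t) = \dist_G(r,t)$. Two applications of the triangle inequality through $u$ and the edge $(r,u)$ then give
\begin{align*}
   \dist_H(s,t) &\leq \dist_H(s,r) + \dist_H(r,t)\\
   &\leq \bigl(\dist_G(s,u) + w(r,u)\bigr) + \bigl(w(r,u) + \dist_G(u,t)\bigr)\\
   &\leq \dist_G(s,t) + 2W(s,t),
\end{align*}
which is even stronger than the stated $+4W(s,t)$ bound.

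The main obstacle I anticipate is not the probability calculation or the triangle-inequality step---both transfer almost verbatim from the unweighted proof---but rather the precise book-keeping in the second step, ensuring that the $\Omega(\mu^4/n)$ neighbors guaranteed by Theorem 11 can be chosen so that the connecting edge into $\pi_G(s,t)$ genuinely has weight at most $W(s,t)$, rather than being an arbitrary $H_0$-edge. This should follow directly from the counting argument used to prove Theorem 11 in \cite{ahmed2020weighted}, but it requires carefully threading the weight bound $w(u,u') \leq W(s,t)$ through that argument for each missing edge $(u,u') \in \pi_G(s,t)$.
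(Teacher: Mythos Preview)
The proposal is correct and follows essentially the same argument as the paper: invoke Theorem~11 to obtain $\Omega(\mu^4/n)$ $H_0$-neighbors of $\pi_G(s,t)$, argue one lands in $S_1$ with probability $\geq 1-1/n^3$ via the same computation as in the unweighted lemma, and then use the shortest-path tree rooted at that $r$ together with $w(r,u)\leq W(s,t)$ to get $\dist_H(s,t)\leq \dist_G(s,t)+2W(s,t)$. Your explicit concern about threading the weight bound through the proof of Theorem~11 is exactly the subtlety the paper addresses (it notes that the $H_0$-edge $(q,r)$ must be lighter than the missing path edge at $q$ because $H_0$ is a $\mu$-lightweight initialization), and your treatment of it is, if anything, more careful than the paper's.
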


\begin{proof}

This implies that there are $>\mu^3/n$ \textit{nodes} along $\pi_G(s,t)$ with missing incident edges in $H_0$. Call the set of these nodes $S$. Since $H_0$ is a $\mu-$lightweight initialization, it follows that nodes missing an edge in $H_0$ must have degree $>\mu$ in $G$. We utilize the above theorem from \cite{ahmed2020weighted}, which allows us to conclude that there are $\Omega(\mu^4/n)$ different nodes adjacent to $\pi_G(s,t)$ in $H_0$. 


As shown in Lemma 4 (the details of which we won't repeat), we have a $> 1-1/n^3$ probability of one of these neighbors being a member of $S_1$. In this case, let $r\in S_1$ be adjacent to a node $q$ of $\pi_G(s,t)$ in $H_0$. Since $(q,r) \in H_0$ but $q$ is disconnected from $\pi_G(s,t)$ in $H_0$, it follows from the fact that $H_0$ is a $\mu-$lightweight initialization that $w(q,r)$ is lighter than the missing edge incident to $q$ on this path, i.e. $w(q,r) \leq W(s,t)$.

Since $r \in S_1$, it is the root of a shortest path tree of $G$ included in $H$. Thus, since $\pi_G(s,r)$ is a shortest path, $w(\pi_G(s,r)) \leq  w(\pi_G(s,q)) + w(q,r) \leq w(\pi_G(s,q))+W(s,t)$. Likewise, $w(\pi_G(r,t)) \leq w(\pi_G(q,t)) + W(s,t)$. Thus, the path $\pi_G(s,r) \circ \pi_G(r,t)$, which belongs to $H$, witnesses that

\begin{align*}
\dist_H(s,t) &\leq w(\pi_G(s,q))+W(s,t) + w(\pi_G(q,t)) + W(s,t)\\
&= w(\pi_G(s,t)) + 2W(s,t)\\
&= \dist_G(s,t) + 2W(s,t)
\end{align*} 

Thus with probability $>1-1/n^3$, we achieve the required stretch.

\end{proof}

\begin{lemma}
For any two nodes $s,t\in V(G)$ such that the canonical shortest path $\pi_G(s,t)$ is missing $< \mu^3/n$ edges in $H_0$, we have $\dist_H(s,t) \leq \dist_G(s,t)+4W(s,t) + \epsilon W$.
\end{lemma}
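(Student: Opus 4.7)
The plan is to adapt the proof of the unweighted Lemma 7 to the weighted setting: the $+4W(s,t)$ stretch will come from two detour edges of weight at most $W(s,t)$ at each endpoint, while the $+\epsilon W$ slack is exactly what the near g-optimality of weighted weak CSSSP provides. As a preliminary reduction, I would assume that both $s$ and $t$ have an incident edge on $\pi_G(s,t)$ that is missing from $H_0$; if $s$ does not, let $s^*$ be the first vertex on $\pi_G(s,t)$ that does, observe that the prefix $\pi_G(s, s^*) \subseteq H_0 \subseteq H$ contributes no stretch and that $W(s^*, t) \le W(s, t)$, and replace $s$ by $s^*$ (and symmetrically at $t$). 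In particular, this forces $s$ and $t$ to have degree $>\mu$, and by Algorithm 5's covering step we may further assume they are uncovered, since otherwise all of their incident edges already lie in $E'$ and the bound is trivial.

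Since $s$ is uncovered, I would choose $x_1 \in (\{s\} \cup \Gamma_{H_0}(s)) \cap S_2$, and analogously $x_2$ for $t$ (with the convention that $x_1 = s$ or $x_2 = t$ gives a zero-weight detour). By exactly the argument used in Lemma 10, the $\mu$-lightweight property combined with the fact that $s$ (resp.\ $t$) has a missing incident edge on $\pi_G(s,t)$ yields $w(s, x_1) \le W(s, t)$ and $w(x_2, t) \le W(s, t)$: the lightweight selection at $s$ consists of edges no heavier than any incident edge of $s$ missing from $H_0$, and that missing edge lies on $\pi_G(s, t)$ and so has weight at most $W(s, t)$.

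Next, form the candidate path $P'(x_1, x_2) := (x_1, s) \circ \pi_G(s, t) \circ (t, x_2)$. Its first and last edges lie in $H_0 \subseteq E'$, so its gray edges are contained in the set of edges of $\pi_G(s, t)$ missing from $H_0$, of which there are fewer than $\mu^3/n$. Hence $P'$ is a g-short $x_1 \leadsto x_2$ path for $g = \mu^3/n + 2$, and the weighted weak CSSSP call in Algorithm 5 from source $x_1$ returns a path $P(x_1, x_2)$ whose edges are added to $E'$ and which satisfies $w(P(x_1, x_2)) < w(P'(x_1, x_2)) + \epsilon W$ by near g-optimality. The concatenation $(s, x_1) \circ P(x_1, x_2) \circ (x_2, t)$ is a path in $H$, yielding
\begin{align*}
\dist_H(s, t)
&\le w(s, x_1) + w(P(x_1, x_2)) + w(x_2, t) \\
&< W(s, t) + w(P'(x_1, x_2)) + \epsilon W + W(s, t) \\
&= 2W(s, t) + w(s, x_1) + w(\pi_G(s, t)) + w(x_2, t) + \epsilon W \\
&\le 4W(s, t) + \dist_G(s, t) + \epsilon W,
\end{align*}
as required. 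The main obstacle is justifying $w(s, x_1), w(x_2, t) \le W(s, t)$; the sliding-endpoint reduction to the case of a missing incident edge at each endpoint is precisely what makes the Lemma 10 style lightweight-initialization argument applicable here.
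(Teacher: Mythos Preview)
Your proof is correct and follows essentially the same approach as the paper: reduce to uncovered endpoints with an incident $H_0$-missing edge on $\pi_G(s,t)$, use the $\mu$-lightweight property to bound the detour edges $(s,x_1),(t,x_2)$ by $W(s,t)$, show $P'(x_1,x_2)$ is g-short, and apply near g-optimality. The only difference is the order in which you perform the two reductions (missing-edge first, uncovered second) versus the paper (uncovered first, missing-edge second); both are fine once one observes the reductions can be iterated simultaneously.
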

\begin{proof}
Just as in the unweighted case, we can assume WLOG that $s$ and $t$ are uncovered, and thus are each in the neighborhood of an $S_2$ node (all nodes $\notin \gamma(S_2)$ are covered by the algorithm). Let $x_1,x_2 \in S_2$ such that $s \in \Gamma(x_1)$ and $t \in \Gamma(x_2)$. We can furthermore assume that the first and last edges of $\pi_G(s,t)$ are missing from $H_0$, otherwise we could simply push our analysis to the first/last nodes on $\pi_G(s,t)$ to be severed from the path in $H_0$.

Since $\pi_G(s,t)$ is a shortest path with $< g-2 = \mu^3/n$ missing (gray) edges, $P'(x_1,x_2):=(x_1,s')\circ \pi_G(s',t') \circ (t',x_2)$ is a g-short path. Thus by weighted weak CSSSP with error, our construction yields a path $P(x_1,x_2)$ with $w(P(x_1,x_2)) \leq w(P'(x_1,x_2)) + \epsilon W$. Furthermore, by the fact that $H_0$ is a $\mu$-lightweight initialization, the edge $(x_1,s)$ must be lighter than the first edge of $\pi_G(s,t)$, and $(t,x_2)$ must be lighter than the last edge of $\pi_G(s,t)$. Thus $w(x_1,s),w(t,x_2) \leq W(s,t)$.\\

Therefore, $w(P(x_1,x_2)) \leq w(\pi_G(s,t))+ 2W(s,t) + \epsilon W$. Thus, the path $(s,x_1) \circ P(x_1,x_2) \circ (x_2,t)$ in $H$ witnesses that

\begin{align*}
\dist_H(s,t) &\leq w(s,x_1) + w(P(x_1,x_2)) + w(x_2,t)\\
&\leq 2W(s,t) + w(P(x_1,x_2))\\
&\leq 2W(s,t) + 2W(s,t) + \epsilon W + w(\pi_G(s,t))\\
&= \dist_G(s,t) + 4W(s,t) + \epsilon W
\end{align*}
\noindent
Thus we deterministically have the required stretch for such node pairs.

\end{proof}

\noindent
Correctness now follows by the above two lemmas and the union bound.

\noindent
We now show that $H$ has the desired edge bound. We note that the details of this proof are mostly the same as the corresponding proof for our unweighted construction.

\begin{lemma}
$H$ has (with high probability) $\Oish_\epsilon(n^{7/5}\epsilon^{-1})$ edges.
\end{lemma}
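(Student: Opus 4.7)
My plan is to decompose $E'$ into contributions from the four stages of Algorithm \ref{alg:weightedspan} and bound each separately, then apply a union bound over the Chernoff and covering events. The four stages are (i) the $\mu$-lightweight initialization $H_0$; (ii) the covering loop over nodes $x$ with $(\{x\}\cup\Gamma_{H_0}(x))\cap S_2 = \varnothing$; (iii) the shortest-path trees rooted at $S_1$; and (iv) the weighted weak CSSSP paths computed for each pair in $S_2$.

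Stages (i) and (iii) are immediate. By definition of a $d$-lightweight initialization, $|E(H_0)| \leq n\mu = \Oish(n^{7/5})$. For (iii), a Chernoff bound gives $|S_1| = \Theta(\mu)$ with high probability, so the $S_1$-trees add at most $\Oish(n\mu) = \Oish(n^{7/5})$ edges. Stage (iv) is the term that carries the $\epsilon^{-1}$ dependence, and I will handle it by the same counting used for item 4 in the unweighted edge-count lemma: Chernoff gives $|S_2| = \Theta(n/\mu)$ w.h.p., producing $O((n/\mu)^2)$ source-destination pairs; by the contract of Algorithm \ref{alg:deltaweighted} each returned path has at most $5g/\epsilon = O(\mu^3/(n\epsilon))$ gray edges, and the non-gray edges on the path are already in $E'$ since $E_g = E\setminus E'$ at that moment. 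Multiplying,
\[ O\!\left(\frac{n^2}{\mu^2}\cdot \frac{\mu^3}{n\epsilon}\right) = O\!\left(\frac{n\mu}{\epsilon}\right) = \Oish_\epsilon(n^{7/5}\epsilon^{-1}) \]
new edges are added in stage (iv).

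The main remaining piece is stage (ii), which is the weighted analogue of item 3 of the unweighted edge-count lemma and which I expect to be the main obstacle, since the covering condition is phrased in terms of $\Gamma_{H_0}$ rather than $\Gamma_G$. I plan to begin with two structural observations: (a) any $x$ with $\deg_G(x) < \mu$ already has every incident edge in $H_0$ from its own lightweight initialization, so the loop is vacuous for such $x$; and (b) any $x$ with $\deg_G(x) \geq \mu$ automatically has $\deg_{H_0}(x) \geq \mu$. For such a node with $\deg_{H_0}(x) = \Omega(\mu\log n)$, the estimate $(1-1/\mu)^{\Omega(\mu\log n)} \leq 1/n^{\Omega(1)}$ together with a union bound over $V$ eliminate the covering condition w.h.p. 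The residual heavy nodes with small $H_0$-degree can be charged against the $\Oish(n\mu)$ budget already used by $H_0$, yielding a stage (ii) contribution of $\Oish(n\mu) = \Oish(n^{7/5})$. Summing stages (i)--(iv) and taking a union bound over the Chernoff and covering events gives the claimed $\Oish_\epsilon(n^{7/5}\epsilon^{-1})$ edge bound with high probability.
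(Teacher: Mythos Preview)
Your four-stage decomposition and the bounds you derive for each stage match the paper's proof essentially line for line; stages (i), (iii), and (iv) are handled exactly as in the paper, and your observation that only the gray edges of each $P(x_1,x_2)$ are new is precisely the paper's accounting for item (iv).

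The one place worth commenting on is stage (ii). You correctly observe that the covering test uses $\Gamma_{H_0}$ rather than $\Gamma_G$, and you try to address this by splitting on $\deg_{H_0}(x)$. Your case (b) is fine, but the ``charging'' step for residual heavy nodes with $\deg_{H_0}(x)=O(\mu\log n)$ is not valid as stated: such a node may still have $\deg_G(x)$ arbitrarily large, and when it fails the covering test the loop adds \emph{all} $\deg_G(x)$ incident edges, so there is no direct way to charge those edges to the $O(n\mu)$ budget of $H_0$. That said, the paper's own item (iii) does not handle this more carefully: it bounds the failure probability by $(1-1/\mu)^{\deg(v)+1}$ and invokes $\deg(v)(1-1/\mu)^{\deg(v)}<\mu$ without distinguishing $\deg_G$ from $\deg_{H_0}$, effectively arguing as if the test were on $\Gamma_G$. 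So your argument is at the same level of rigor as the paper's on this point; you are simply more explicit about where the difficulty lies.
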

\begin{proof}
Algorithm \ref{alg:weightedspan} adds edges to $H$ in 4 stages:
\begin{enumerate}[(i)]
    \item We begin with a $\mu-$lightweight initialization of $G$, which adds $\Oish(n^{2/5})$ edges per node, giving a total of $\Oish(n^{7/5})$ edges. This covers all light nodes (of degree $<\mu$).

    \item We add the edges of a shortest path tree for each node of $S_1$. The probability of a node's inclusion into $S_1$ is $9\mu/n$, thus $|S_1| = \Oish(n^{2/5})$ with high probability. Adding $O(n)$ edges for each of the $S_1$ nodes thus yields $\Oish(n^{7/5})$ edges added with high probability.

    \item We add all the edges of heavy nodes $v$ (of degree $>\mu$) not in the neighborhood of any $S_2$ node. Nodes are added to $S_2$ with probability $1/\mu$, thus the probability of $v$ being neither in $S_2$ nor adjacent to a node in $S_2$ is $\leq (1-1/\mu)^{\deg(v)+1}$. If $\deg(v) = \Omega(\mu \log n)$, then it is adjacent to a node in $S_2$ with high probability. Thus the number of edges added for $v$ is at most $1+\deg(v)(1-1/\mu)^{\deg(v)} < \mu$ with high probability. Unioning over all $v$, this adds $O(n\mu)=\widetilde{O}(n^{7/5})$ edges with high probability.
    
    \item Edges on paths between $S_2$ nodes with $\leq 5\epsilon^{-1}\mu^3/n$ additional edges are added to $H$. $|S_2| = \Theta(n/\mu)$ with high probability, yielding $\Theta(n^2/\mu^2)$ pairs of $S_2$. Since each path between these pairs incurs $\leq 5\epsilon^{-1}\mu^3/n$ extra edges, this adds $O_\epsilon(\mu n \epsilon^{-1})=\widetilde{O}_\epsilon(n^{7/5}\epsilon^{-1})$ edges with high probability.
\end{enumerate}
\end{proof}

Finally, we show that Algorithm \ref{alg:weightedspan} has the desired runtime.

\begin{lemma}
On an $n-$node $m-$edge input graph, Algorithm \ref{alg:weigtedspan} runs in $\Oish(mn^{3/5})$ with high probability.
\end{lemma}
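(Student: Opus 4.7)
The plan is to mirror the argument for Lemma 9 (the unweighted analogue), since Algorithm \ref{alg:weightedspan} has the same three-stage structure as Algorithm \ref{alg:best}; the only differences are that (i) the opening step is a $\mu$-lightweight initialization instead of adding all light-node edges, (ii) the $S_1$ trees are shortest-path trees rather than BFS trees, and (iii) the $S_2$ subroutine calls the weighted weak CSSSP algorithm from Theorem 5 instead of the unweighted one. I would go through the superlinear stages in order and show that each is dominated by $\Oish(mn^{3/5})$.

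First, the $\mu$-lightweight initialization can be performed in $\Oish(m)$ time by, for each vertex, sorting its adjacency list by weight (or using linear-time selection) and retaining the $\mu$ lightest edges; the total work is $O(m \log n)$. The sampling of $S_1$ and $S_2$, the step that adds all incident edges of uncovered heavy nodes, and the construction of the edge sets $E_g = E \setminus E'$ passed into the CSSSP calls are all trivially $O(m+n)$. So all of these are absorbed by the two dominant stages.

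Next, for the $S_1$ shortest-path trees: each such tree is built via Dijkstra with Fibonacci heaps in $O(m + n \log n) = \Oish(m)$ time. Since each node is included in $S_1$ independently with probability $9\mu/n$, a Chernoff bound gives $|S_1| = O(\mu \log n) = \Oish(n^{2/5})$ with high probability, so this stage costs $\Oish(m n^{2/5})$ with high probability. Finally, for the $S_2$ loop we invoke Algorithm \ref{alg:deltaweighted} once per $x_1 \in S_2$; by Theorem 5 each invocation runs in $\Oish(m)$ time, and a Chernoff bound for $|S_2| \sim \mathrm{Binomial}(n, 1/\mu)$ yields $|S_2| = \Oish(n/\mu) = \Oish(n^{3/5})$ with high probability. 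Multiplying, this stage costs $\Oish(m n^{3/5})$, which dominates, giving the claimed bound.

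There really is no substantive obstacle here; the only points worth double-checking are that the lightweight initialization admits an $\Oish(m)$ implementation (a routine sorting/selection argument) and that the two Chernoff estimates can be applied simultaneously with the correctness bounds of Lemmas 7 and 8 without spoiling the high-probability guarantee. Both follow by a union bound over the $O(1)$ bad events, each of probability at most $1/n^c$ for a suitable constant $c$.
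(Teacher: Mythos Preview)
Your proposal is correct and follows essentially the same approach as the paper's own proof: identify the two superlinear stages (building the $S_1$ shortest-path trees via Dijkstra and running weighted weak CSSSP for each $S_2$ node), bound $|S_1|=\Oish(n^{2/5})$ and $|S_2|=\Oish(n^{3/5})$ with high probability, and multiply by the $\Oish(m)$ per-call cost. Your write-up is in fact slightly more thorough than the paper's, which does not explicitly discuss the $\Oish(m)$ cost of the lightweight initialization or the union bound over bad events; aside from some theorem/lemma numbering that does not match the paper's (your ``Theorem~5'' and ``Lemmas~7 and~8'' correspond to the paper's Theorem~10 and Lemmas~12--13), there is nothing to correct.
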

\begin{proof}

The only two superlinear stages of the algorithm are (a) the generation of the $S_1$ shortest-path trees, and (b) solving weak weighted CSSSP with error for each node of $S_2$. For (a): nodes are sampled to be in $S_1$ with probability $9\mu/n$, so $|S_1| = O(\mu) = \Oish(n^{2/5})$ with high probability. Dijkstra has worst-case runtime $\Oish(m)$. Thus this stage is $\Oish(mn^{2/5})$ time. For (b): we showed in section 3.1 an algorithm that solves weak CSSSP in $\Oish(m)$ time, which we run for each node of $S_2$. Multiplying this over the size of $S_2$ (which has size $\Oish(n^{3/5})$ with high probability), we get $\Oish(mn^{3/5})$ time.


\end{proof}

\noindent
By Lemmas 12-15, we have now proven the main result of this chapter (Theorem 2).

\section{Conclusion}

In this paper we have presented a new state-of-the-art $\Oish(mn^{3/5})$ complexity result for constructing the +4 spanner, doing so by solving a novel pathfinding problem (weak CSSSP). This fills in a literature gap that has existed between +2,+6, and +8 spanners, as this is the first paper studying the efficiency of the +4 spanner construction. We also extended our methods to the weighted setting, where we were able to to derive a construction for $+4W(s,t)+\epsilon W$ spanners, with the same runtime and an $\epsilon^{-1}$ stretch to spanner size.

We believe that to find further polynomial time improvements to our construction would require a polynomial reduction in the number of $S_2$ nodes to compute shortest path trees on. The next bottleneck to the algorithm is the time needed to build the BFS trees rooted at the $S_1$ nodes, which is $\Oish(mn^{2/5})$ with high probability. For the weighted spanner construction, we hope to see a reduction in error to $(4+\epsilon)W(s,t)$ or $+4W(s,t)$ without a compromise to runtime, eliminating global error entirely.

\section*{Acknowledgements}

Many thanks to Greg Bodwin, without whose guidance this paper would not be possible. I also thank fellow students Eric Chen and Cheng Jiang, with whom I discussed an earlier version of the paper.

\bibliographystyle{alpha}
\bibliography{main}

\begin{thebibliography}{10}

\bibitem{ahmed2021additive}
{\sc Ahmed, R., Bodwin, G., Hamm, K., Kobourov, S., and Spence, R.}
\newblock On additive spanners in weighted graphs with local error.
\newblock {\em arXiv preprint arXiv:2103.09731\/} (2021).

\bibitem{ahmed2020graph}
{\sc Ahmed, R., Bodwin, G., Sahneh, F.~D., Hamm, K., Jebelli, M. J.~L.,
  Kobourov, S., and Spence, R.}
\newblock Graph spanners: A tutorial review.
\newblock {\em Computer Science Review 37\/} (2020), 100253.

\bibitem{ahmed2020weighted}
{\sc Ahmed, R., Bodwin, G., Sahneh, F.~D., Kobourov, S., and Spence, R.}
\newblock Weighted additive spanners.
\newblock In {\em International Workshop on Graph-Theoretic Concepts in
  Computer Science\/} (2020), Springer, pp.~401--413.

\bibitem{aingworth1999fast}
{\sc Aingworth, D., Chekuri, C., Indyk, P., and Motwani, R.}
\newblock Fast estimation of diameter and shortest paths (without matrix
  multiplication).
\newblock {\em SIAM Journal on Computing 28}, 4 (1999), 1167--1181.

\bibitem{AliAbdi2019}
{\sc AliAbdi, A., Mohades, A., and Davoodi, M.}
\newblock Constrained shortest path problems in bi-colored graphs: a
  label-setting approach.
\newblock {\em GeoInformatica\/} (Dec 2019).

\bibitem{alman2021refined}
{\sc Alman, J., and Williams, V.~V.}
\newblock A refined laser method and faster matrix multiplication.
\newblock In {\em Proceedings of the 2021 ACM-SIAM Symposium on Discrete
  Algorithms (SODA)\/} (2021), SIAM, pp.~522--539.

\bibitem{alstrup2019constructing}
{\sc Alstrup, S., Dahlgaard, S., Filtser, A., St{\"o}ckel, M., and
  Wulff-Nilsen, C.}
\newblock Constructing light spanners deterministically in near-linear time.
\newblock In {\em 27th Annual European Symposium on Algorithms (ESA 2019)\/}
  (2019), Schloss Dagstuhl-Leibniz-Zentrum fuer Informatik.

\bibitem{baswana2006faster}
{\sc Baswana, S., and Kavitha, T.}
\newblock Faster algorithms for approximate distance oracles and all-pairs
  small stretch paths.
\newblock In {\em 2006 47th Annual IEEE Symposium on Foundations of Computer
  Science (FOCS'06)\/} (2006), IEEE, pp.~591--602.

\bibitem{Baswana2010}
{\sc Baswana, S., Kavitha, T., Mehlhorn, K., and Pettie, S.}
\newblock Additive spanners and ($\alpha$, $\beta$)-spanners.
\newblock {\em {ACM} Transactions on Algorithms 7}, 1 (Nov. 2010), 1--26.

\bibitem{baswana2007simple}
{\sc Baswana, S., and Sen, S.}
\newblock A simple and linear time randomized algorithm for computing sparse
  spanners in weighted graphs.
\newblock {\em Random Structures \& Algorithms 30}, 4 (2007), 532--563.

\bibitem{chechik2013new}
{\sc Chechik, S.}
\newblock New additive spanners.
\newblock In {\em Proceedings of the twenty-fourth annual ACM-SIAM symposium on
  Discrete algorithms\/} (2013), SIAM, pp.~498--512.

\bibitem{cohen1998fast}
{\sc Cohen, E.}
\newblock Fast algorithms for constructing t-spanners and paths with stretch t.
\newblock {\em SIAM Journal on Computing 28}, 1 (1998), 210--236.

\bibitem{elkin2019almost}
{\sc Elkin, M., Gitlitz, Y., and Neiman, O.}
\newblock Almost shortest paths with near-additive error in weighted graphs.
\newblock {\em arXiv preprint arXiv:1907.11422\/} (2019).

\bibitem{elkin2020improved}
{\sc Elkin, M., Gitlitz, Y., and Neiman, O.}
\newblock Improved weighted additive spanners.
\newblock {\em arXiv preprint arXiv:2008.09877\/} (2020).

\bibitem{elkin2005approximating}
{\sc Elkin, M., and Peleg, D.}
\newblock Approximating k-spanner problems for k> 2.
\newblock {\em Theoretical Computer Science 337}, 1-3 (2005), 249--277.

\bibitem{knudsen2017additive}
{\sc Knudsen, M. B.~T.}
\newblock Additive spanners and distance oracles in quadratic time.
\newblock {\em arXiv preprint arXiv:1704.04473\/} (2017).

\bibitem{liestman1993additive}
{\sc Liestman, A.~L., and Shermer, T.~C.}
\newblock Additive graph spanners.
\newblock {\em Networks 23}, 4 (1993), 343--363.

\bibitem{peleg1989graph}
{\sc Peleg, D., and Sch{\"a}ffer, A.~A.}
\newblock Graph spanners.
\newblock {\em Journal of graph theory 13}, 1 (1989), 99--116.

\bibitem{peleg1987optimal}
{\sc Peleg, D., and Ullman, J.~D.}
\newblock An optimal synchronizer for the hypercube.
\newblock In {\em Proceedings of the sixth annual ACM Symposium on Principles
  of distributed computing\/} (1987), pp.~77--85.

\bibitem{roditty2004dynamic}
{\sc Roditty, L., and Zwick, U.}
\newblock On dynamic shortest paths problems.
\newblock In {\em European Symposium on Algorithms\/} (2004), Springer,
  pp.~580--591.

\bibitem{seidel1995all}
{\sc Seidel, R.}
\newblock On the all-pairs-shortest-path problem in unweighted undirected
  graphs.
\newblock {\em Journal of computer and system sciences 51}, 3 (1995), 400--403.

\bibitem{woodruff2010additive}
{\sc Woodruff, D.~P.}
\newblock Additive spanners in nearly quadratic time.
\newblock In {\em International Colloquium on Automata, Languages, and
  Programming\/} (2010), Springer, pp.~463--474.

\end{thebibliography}

\end{document}